\providecommand{\algorithmname}{Algorithm}
\def\BibTeX{{\rm B\kern-.05em{\sc i\kern-.025em b}\kern-.08em
    T\kern-.1667em\lower.7ex\hbox{E}\kern-.125emX}}
\theoremstyle{plain}
\newtheorem{theorem}{\protect\theoremname}
\theoremstyle{remark}
\theoremstyle{plain}
\theoremstyle{plain}
\newtheorem{remark}{Remark}
\newtheorem{property}{Property}
\providecommand{\corollaryname}{Corollary}
\providecommand{\remarkname}{Remark}
\providecommand{\theoremname}{Theorem}
\providecommand{\lemmaname}{Lemma}
\begin{document}

\title{Tensor-based Multi-dimensional Wideband Channel Estimation for mmWave
Hybrid Cylindrical Arrays }

\author{Zhipeng Lin, \emph{Member, IEEE}, Tiejun Lv, \emph{Senior Member,
IEEE}, Wei Ni, \emph{Senior Member, IEEE},\\
J. Andrew Zhang, \emph{Senior Member, IEEE}, and Ren Ping Liu, \emph{Senior
Member, IEEE}

\thanks{This work was supported by the National Natural Science
Foundation of China (NSFC) under Grant 61671072.
\emph{(Corresponding author: Tiejun Lv.)}

Z. Lin and T. Lv are with the School of Information and Communication
Engineering, BUPT, Beijing, China (email: \{linlzp, lvtiejun\}@bupt.edu.cn).
Z. Lin is also with the School of Electrical and Data Engineering,
UTS, Sydney, Australia.

W. Ni is with the Data 61, CSIRO, Sydney, Australia (e-mail: Wei.Ni@data61.csiro.au).

J. A. Zhang and R. P. Liu are with the School of Electrical and Data
Engineering, UTS, Sydney, Australia (e-mail: \{Andrew.Zhang, RenPing.Liu\}@uts.edu.au).}}
\maketitle
\begin{abstract}
Channel estimation is challenging for hybrid millimeter wave (mmWave)
large-scale antenna arrays which are promising in 5G/B5G applications.
The challenges are associated with angular resolution losses resulting
from hybrid front-ends, beam squinting, and susceptibility to the
receiver noises. Based on tensor signal processing, this paper presents
a novel multi-dimensional approach to channel parameter estimation
with large-scale mmWave hybrid uniform circular cylindrical arrays
(UCyAs) which are compact in size and immune to mutual coupling but
known to suffer from infinite-dimensional array responses and intractability.
We design a new resolution-preserving hybrid beamformer and a low-complexity
beam squinting suppression method, and reveal the existence of shift-invariance
relations in the tensor models of received array signals at the UCyA.
Exploiting these relations, we propose a new tensor-based subspace
estimation algorithm to suppress the receiver noises in all dimensions
(time, frequency, and space). The algorithm can accurately estimate
the channel parameters from both coherent and incoherent signals.
Corroborated by the  Cram\'{e}r-Rao lower bound (CRLB), simulation
results show that the proposed algorithm is able to achieve substantially
higher estimation accuracy than existing matrix-based techniques,
with a comparable computational complexity.
\end{abstract}

\begin{IEEEkeywords}
5G/B5G, millimeter wave, large-scale antenna array, tensor, hybrid
beamformer.
\end{IEEEkeywords}

\section{Introduction}

\global\long\def\figurename{Fig.}
 Massive hybrid antenna arrays can balance the hardware cost and complexity
of wideband millimeter wave (mmWave) transceivers in fifth generation
(5G) and beyond 5G (B5G) mobile communications \cite{Shahmansoori}.
Wideband mmWave hybrid circular arrays are particularly interesting
owing to their compact size, strong immunity to mutual coupling, and
inherently symmetric structure that enables 360-degree azimuth coverage
\cite{Circular1}. Channel parameter estimation for wideband mmWave
hybrid circular arrays is challenging, due to high-dimensional parameters,
large signal bandwidth, large signal propagation loss, and subsequent
susceptibility to noises \cite{TensorWidenband,ya2,Inter1}.

Existing channel parameter estimation algorithms (for the azimuth
and elevation angles, and the propagation delay of an incident signal)
have typically been matrix-based. By those matrix-based algorithms,
the relations between different dimensions (i.e., domains) of the
signal become obscure, because the received multi-dimensional (i.e.,
space, time and frequency) signals are stacked into two-dimensional
matrices \cite{TensorDecompositions2,tensor1HoSVD}. Moreover, typical
high-resolution matrix-based subspace estimation algorithms, such
as multiple signal classification (MUSIC) \cite{34} and estimation
of signal parameters via rotational invariance techniques (ESPRIT)
\cite{we2}, were designed for narrowband systems, where channel parameters
vary negligibly within the system band and are unaffected by an adverse
beam squinting effect \cite{wideband}.

Wideband signal-subspace methods (WSSMs) \cite{ICCMwhole,GC} have
been used to remove the frequency dependence of array steering vectors
and suppress the beam squinting effect, before applying (narrowband)
subspace estimation algorithms in wideband mmWave systems. Existing
incoherent WSSMs (IWSSMs) \cite{we11,narrow} decompose received signals
into multiple non-overlapping narrowbands, and estimate the parameters
independently at each narrowband. These methods \cite{we11,narrow}
do not utilize the high temporal resolution offered by wideband mmWave
systems. In \cite{ICCMearly,ICCMwhole}, coherent WSSMs (CWSSMs) map
the frequency-dependent array steering matrices to a reference frequency
by producing so-called focusing matrices. The generation of the focusing
matrices in these methods requires initial values, and the performance
of the methods is susceptible to the initial values. A variation of
CWSSM, named unitary constrained array manifold interpolation (UCAMI),
is proposed in \cite{UCAM,TensorWidenband}. It eliminates the need
for initial estimates and avoids focusing loss\footnote{Focusing loss refers to the ratio between the array signal-to-noise
ratios after and before focusing operations. Focusing loss can be
avoided by constraining that the focusing matrices are unitary \cite{ICCMwhole}.}. However, the focusing matrices of UCAMI are obtained by solving
multi-dimensional optimization problems. The dimension of the problems
is equal to the number of estimation parameters, and UCAMI is computationally
expensive. To overcome the beam squinting effect, an approximated
channel model is developed in \cite{re1} to quantize the angular
space, which would introduce errors and grid mismatches leading to
a degraded channel estimation accuracy. To circumvent the grid mismatch,
the algorithm developed in \cite{re1} repeatedly refines the angular
grid and applies compressive sensing to estimate parameters. As a
result, multiple iterative reweighted least squares problems need
to be solved.

\begin{figure*}
\begin{centering}
\includegraphics[width=14cm]{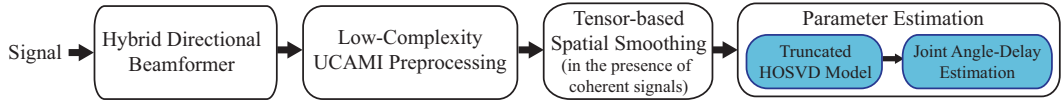}
\par\end{centering}
\centering{}\caption{The flow diagram of the proposed localization approach. From left
to right, the four key steps are described in Section III, Section
IV, Section V-C, and Sections V-A and V-B, respectively. \label{fig:The-flow-diagram}}
\end{figure*}

Tensor-based channel parameter estimations have been demonstrated
to be more powerful than conventional matrix-based techniques in \cite{TensorDecompositions1,TensorDecompositions2,tensorMagazine}.
By arranging the received signals in a tensor form and applying tensor
decomposition, the multi-dimensional parameters can be estimated with
super-high accuracy \cite{TensorDecompositions2,tensor1HoSVD}. The
papers \cite{tensor1HoSVD,tensorMUSIC,tensor3HoSVD,ICC} present tensor-based
algorithms for multi-dimensional channel parameter estimation, which
preserve the multi-dimensional structure of signals and improve estimation
accuracy in scatter-rich microwave-band channels. The authors of \cite{tensormmWave}
and \cite{tensorBFmmWave} exploit the sparsity of mmWave channels
to further improve the estimation accuracy. However, their algorithms
require an alternating-least-squares procedure with no guarantee of
convergence. In addition, the algorithm in \cite{tensorBFmmWave}
is only suitable for narrowband systems with uniform rectangular arrays
(URAs).

This paper presents a novel tensor-based approach for multi-dimensional
wideband channel estimation in large-scale mmWave hybrid uniform cylindrical
arrays (UCyAs). The key contributions of the paper are as follows:
\begin{itemize}
\item We design the hybrid beamformers by using quasi-discrete Fourier transform
(Q-DFT) to maintain the angular resolution of the hybrid UCyA with
a reduced number of radio frequency (RF) chains. Developing and applying
a low-complexity UCAMI, we suppress the beam squinting effect and
enable coherent combining across the wideband. These are two salient
steps for our new tensor-based parameter estimation.
\item We propose a new tensor-based subspace estimation algorithm to jointly
estimate the delay and the azimuth and elevation angles of each received
signal by exploiting the important shift-invariance relations in the
constructed truncated higher-order singular value decomposition (HOSVD)
model. The algorithm can suppress the receiver noises in all of the
time, frequency, and space dimensions, and hence accurately estimate
the high-dimensional channel parameters of multiple coherent or incoherent
signal sources.
\item We introduce a new way to rearrange the measurement tensor of the
received signals to decorrelate coherent signals at the hybrid UCyA,
i.e., spatial smoothing. Coherent signals can then be separated and
can be estimated independently by using the proposed tensor subspace
estimation algorithm.
\end{itemize}

The steps of the proposed approach are illustrated in Fig. \ref{fig:The-flow-diagram},
and elaborated on in the rest of this paper. In the first step, the
received signals are first synthesized by a hybrid directional beamformer,
which uses Q-DFT to reduce the number of required RF chains (with
a negligible cost of the channel estimation accuracy at the later
stages of the technique). In other words, this step reduces the dimension
of the received signals, so that the signals can be processed with
much fewer RF chains (than antennas). The second step is a proposed
low-complexity UCAMI, which suppresses the beam squinting effect efficiently
by only optimizing the focusing matrices in the elevation angular
domain. The third step is to reveal and exploit the inherent linear
recurrence relations in the first mode of the measure tensor and run
spatial smoothing to decorrelate the coherent signals. Finally, the
new tensor-based joint delay-angle estimation algorithm is carried
out to estimate the delay and azimuth and elevation angles based on
the constructed truncated HOSVD model of the measure tensor.

Validated by the  Cram\'{e}r-Rao lower bound (CRLB), simulation results
show that the proposed algorithm is able to achieve much higher accuracy
than state-of-the-art matrix-based techniques for wideband mmWave
hybrid UCyAs. The new tensor-based algorithms work well even when
the signal-to-noise ratio (SNR) is low, credited to the effective
noise suppression in all of the time, space, and frequency domains.

Different from the existing studies, e.g, \cite{re1}, we develop
a new low-complexity UCAMI to suppress the beam squinting effect,
which does not quantize the angular space{} and hence no quantization
error will occur. Moreover, we reveal and exploit inherent shift-invariance
relations \cite{fmri} in each domain/mode of the measurement tensor.
As a result, our algorithm only needs to solve a one-time HOSVD of
the measurement tensor to estimate the multi-dimensional parameters
jointly.

The rest of this paper is organized as follows. The system model is
introduced in Sections II. In Sections III and IV, we design the hybrid
beamformers and suppress the beam squinting effect in the received
signals. In Section V, we introduce the new tensor-based parameter
estimation algorithm. Simulations are provided in Section VI, followed
by conclusions in Section VII.

\subsection{Preliminary and Notation }

Notations $a$, $\mathbf{a}$, $\mathbf{A}$, and $\mathbb{A}$ stand
for scalar, column vector, matrix, and set, respectively. $\mathbf{I}_{K}$
and $\mathbf{0}_{M\times K}$ denote a $K\times K$ identity matrix
and an $M\times K$ zero matrix, respectively. $\mathbf{A}^{\ast}$,
$\mathbf{A}^{T}$ and $\mathbf{A}^{H}$ denote the conjugate, transpose
and conjugate transpose of $\mathbf{A}$, respectively. $\left\Vert \mathbf{A}\right\Vert _{\textrm{F}}$
denotes the Frobenius norm of $\mathbf{A}$. $\otimes$ and $\diamond$
denote the Kronecker product and Khatri-Rao product, respectively.
$\hat{a}$ denotes the estimate of $a$.

Tensor is the generalization of scalar (which has a zero-order mode),
vector (which has one-order mode), and matrix (which has two-order
modes) to arrays with an arbitrary order of modes. We use $\mathcal{A}\in\mathbb{C}^{I_{1}\times I_{2}\times\cdots\times I_{N}}$
to denote an order-$N$ tensor, whose elements (entries) are $a_{i_{1},i_{2},\cdots,i_{N}},$
$i_{n}=1,2,\ldots,I_{n}$, and the index of $\mathcal{A}$ in the
$n$-th mode ranges from 1 to $I_{n}$. By fixing some of the indices,
a subtensor of $\mathcal{A}$ can be formed: $\mathcal{A}_{:,:,\cdots,:,i_{n}=k,:,\cdots,:}$
with the index of the mode-$n$ set to $k$ $\left(0\leq k\leq I_{n}\right)$.
$\times_{n}$ and $\circ$ stand for tensor $n$-mode product and
outer product, respectively. $\left[\mathcal{A}\sqcup_{n}\mathcal{B}\right]$
denotes the tensor concatenation of $\mathcal{A}$ and $\mathcal{B}$
in mode-$n$. The mode-$n$ unfolding (also known as matricization)
of a tensor $\mathcal{A}\in\mathbb{C}^{I_{1}\times I_{2}\times\cdots\times I_{N}}$,
denoted by $\mathbf{A}_{(n)}\in\mathbb{C}^{I_{n}\times(I_{1}I_{2}\cdots I_{N}/I_{n})}$,
arranges the fibers in the $n$-th mode of $\mathcal{A}$ as the columns
of the resulting matrix $\mathbf{A}_{(n)}$. Some important properties
of tensor operations used in this paper are presented in Appendix
I.
\begin{figure*}
\begin{centering}
\includegraphics[width=14cm]{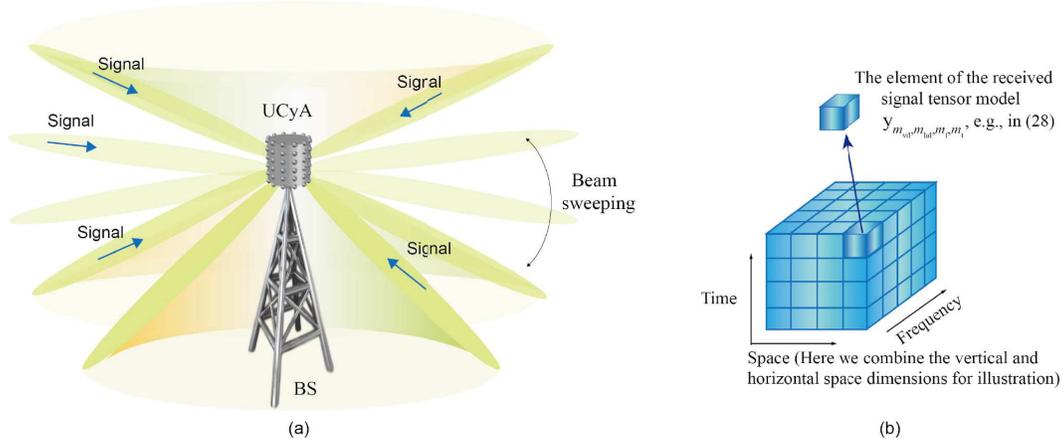}
\par\end{centering}
\centering{}\caption{Illustration on the proposed system and signal models. (a) System
configuration; (b) Signal tensor model.\label{fig:(a)-System-configuration;}}
\end{figure*}

\section{System Model}

In our system, a base station (BS) is equipped with a large-scale
hybrid mmWave UCyA with $M_{\textrm{bs}}$ antennas, consisting of
$M_{\textrm{v}}$ vertically placed uniform circular arrays (UCAs)
each with $M_{\textrm{h}}$ antenna elements, and $M_{\textrm{bs}}=M_{\textrm{v}}M_{\textrm{h}}$.
Let $r$ be the radius of the UCyA, and $h$ be the vertical distance
between any two adjacent vertical elements. A hybrid front-end is
adopted (i.e., there are fewer RF chains than antennas) with consideration
of hardware cost, energy consumption, weight and size. Consider a
wideband orthogonal frequency division multiplexing (OFDM) system,
with $M_{\textrm{f}}$ subcarriers. There are a total of $K$ three-dimensional
(3D) sources, each of which is equipped with a single antenna with
an isotropic beam pattern.

We apply vertical beam sweeping to obtain the signals from the sources,
as shown in Fig.\ref{fig:(a)-System-configuration;}(a). $M_{\textrm{b}}$
evenly spaced elevation angles are swept successively. For each elevation
angle, signal samples of $M_{\textrm{t}}$ time frames are collected
within a sweeping time interval $\tau_{\textrm{b}}$. In the $m_{\textrm{b}}$-th
sweeping beam $(m_{\textrm{b}}=1,\ldots,M_{\textrm{b}})$, the signals
from $K_{m_{\textrm{b}}}$ sources are captured at the BS (and $K\leq\sum_{m_{\textrm{b}}=1}^{M_{\textrm{b}}}K_{m_{\textrm{b}}}$,
due to the partially overlapping sweeping beams). The signal sample
associated with the $m_{\textrm{f}}$-th subcarrier $(m_{\textrm{f}}=1,\ldots,M_{\textrm{f}})$
at the $m_{\textrm{t}}$-th time frame $(m_{\textrm{t}}=1,\ldots,M_{\textrm{t}})$
can be expressed as \cite{Shahmansoori}:
\begin{align}
\mathbf{x}_{m_{\textrm{f}},m_{\textrm{t}},m_{\textrm{b}}} & =\sum_{^{k_{m_{\textrm{b}}}=1}}^{K_{m_{\textrm{b}}}}s_{m_{\textrm{t}},k_{m_{\textrm{b}}}}a_{\textrm{fbs},m_{\textrm{f}},m_{\textrm{b}}}(\tau_{k_{m_{\textrm{b}}}})\mathbf{B}_{m_{\textrm{f}},m_{\textrm{b}}}^{H}\nonumber \\
 & \times\mathbf{a}_{\textrm{bs},m_{\textrm{f}},m_{\textrm{b}}}(\phi_{k_{m_{\textrm{b}}}},\theta_{k_{m_{\textrm{b}}}})+\mathbf{n}_{m_{\textrm{f}},m_{\textrm{t}},m_{\textrm{b}}},\label{eq:ori_signalmodel}
\end{align}
where $\phi_{k_{m_{\textrm{b}}}}$ and $\theta_{k_{m_{\textrm{b}}}}$
are the azimuth and elevation angles-of-arrivals (AOAs) of the $k_{m_{\textrm{b}}}$-th
path, respectively; $\mathbf{a}_{\textrm{bs},m_{\textrm{f}},m_{\textrm{b}}}(\phi_{k_{m_{\textrm{b}}}},\theta_{k_{m_{\textrm{b}}}})\in\mathbb{C}^{M_{\textrm{bs}}}$
denotes the steering vector of the hybrid UCyA; $s_{m_{\textrm{t}},k_{m_{\textrm{b}}}}=\alpha_{k_{m_{\textrm{b}}}}\tilde{s}_{m_{\textrm{t}},k_{m_{\textrm{b}}}}/\sqrt{\rho_{k_{m_{\textrm{b}}}}}$,
where $\tilde{s}_{m_{\textrm{t}},k_{m_{\textrm{b}}}}$ is the transmitted
symbol, $\alpha_{k_{m_{\textrm{b}}}}$ is the signal power, and $\rho_{k_{m_{\textrm{b}}}}$
is the pathloss from the $k_{m_{\textrm{b}}}$-th source to the BS;
$\mathbf{n}_{m_{\textrm{f}},m_{\textrm{b}},m_{\textrm{t}}}\in\in\mathbb{C}^{M_{\textrm{bsd}}}$
denotes the additive white Gaussian noise (AWGN); $\mathbf{B}_{m_{\textrm{f}},m_{\textrm{b}}}=\mathbf{B}_{\textrm{ab}}\mathbf{B}_{\textrm{db},m_{\textrm{f}},m_{\textrm{b}}}\in\mathbb{C}^{M_{\textrm{bs}}\times M_{\textrm{bsd}}}$
is the hybrid beamforming matrix, composed of an analog beamforming
matrix $\mathbf{B}_{\textrm{ab}}\in\mathbb{C}^{M_{\textrm{bs}}\times M_{\textrm{bsr}}}$
and a digital beamforming matrix $\mathbf{B}_{\textrm{db},m_{\textrm{f}},m_{\textrm{b}}}\in\mathbb{C}^{M_{\textrm{bsr}}\times M_{\textrm{bsd}}}$;
$M_{\textrm{bsr}}$ is the number of RF chains; $M_{\textrm{bsd}}$
is the number of data streams after hybrid beamforming; and
\begin{equation}
a_{\textrm{fbs},m_{\textrm{f}},m_{\textrm{b}}}(\tau_{k_{m_{\textrm{b}}}})=a_{\textrm{f},m_{\textrm{f}}}(\tau_{k_{m_{\textrm{b}}}})b_{\textrm{f},m_{\textrm{b}}},\label{eq:Af}
\end{equation}
where $a_{\textrm{f},m_{\textrm{f}}}(\tau_{k_{m_{\textrm{b}}}})=e^{-j2\pi f_{m_{\textrm{f}}}\tau_{k_{m_{\textrm{b}}}}}$
and $b_{\textrm{f},m_{\textrm{b}}}=e^{-j2\pi f_{m_{\textrm{f}}}(m_{\textrm{b}}-1)\tau_{\textrm{b}}}$
with $\tau_{k_{m_{\textrm{b}}}}$ being the delay of the $k_{m_{\textrm{b}}}$-th
signal and $f_{m_{\textrm{f}}}$ being the $m_{\textrm{f}}$-th subcarrier
frequency. The delay $\tau_{k_{m_{\textrm{b}}}}$ can be used to estimate
the source distance.

Given the structure of UCyA, the array steering vector, i.e., $\mathbf{a}_{\textrm{bs},m_{\textrm{f}},m_{\textrm{b}}}(\phi_{k_{m_{\textrm{b}}}},\theta_{k_{m_{\textrm{b}}}})$,
can be given as the Kronecker product of the vertical and horizontal
array steering vectors:
\begin{align}
 & \mathbf{a}_{\textrm{bs},m_{\textrm{f}},m_{\textrm{b}}}(\phi_{k_{m_{\textrm{b}}}},\theta_{k_{m_{\textrm{b}}}})\nonumber \\
 & =\mathbf{a}_{\textrm{v},m_{\textrm{f}},m_{\textrm{b}}}(\theta_{k_{m_{\textrm{b}}}})\otimes\mathbf{a}_{\textrm{h},m_{\textrm{f}},m_{\textrm{b}}}(\theta_{k_{m_{\textrm{b}}}},\phi_{k_{m_{\textrm{b}}}}).\label{eq:Abs}
\end{align}
The elements of $\mathbf{a}_{\textrm{v},m_{\textrm{f}},m_{\textrm{b}}}(\theta_{k_{m_{\textrm{b}}}})$
and $\mathbf{a}_{\textrm{h},m_{\textrm{f}},m_{\textrm{b}}}(\theta_{k_{m_{\textrm{b}}}},\phi_{k_{m_{\textrm{b}}}})$
are:
\begin{align}
\left[\mathbf{a}_{\textrm{v},m_{\textrm{f}},m_{\textrm{b}}}(\theta_{k_{m_{\textrm{b}}}})\right]_{m_{\textrm{v}},1}=a_{\textrm{v},m_{\textrm{v}},m_{\textrm{f}},m_{\textrm{b}}}(\theta_{k_{m_{\textrm{b}}}})\nonumber \\
=\frac{1}{\sqrt{M_{\textrm{v}}}}\exp\left(-j\frac{2\pi}{c}f_{m_{\textrm{f}}}h(m_{\textrm{v}}-1)\cos(\theta_{k_{m_{\textrm{b}}}})\right),\label{eq:verticalarray}
\end{align}
\begin{align}
 & \left[\mathbf{a}_{\textrm{h},m_{\textrm{f}},m_{\textrm{b}}}(\theta_{k_{m_{\textrm{b}}}},\phi_{k_{m_{\textrm{b}}}})\right]_{m_{\textrm{h}},1}=a_{\textrm{h},m_{\textrm{h}},m_{\textrm{f}},m_{\textrm{b}}}(\theta_{k_{m_{\textrm{b}}}},\phi_{k_{m_{\textrm{b}}}})\nonumber \\
 & =\frac{1}{\sqrt{M_{\textrm{h}}}}\exp\left(j\frac{2\pi}{c}f_{m_{\textrm{f}}}r\sin(\theta_{k_{m_{\textrm{b}}}})\cos(\phi_{k_{m_{\textrm{b}}}}-\varphi_{m_{\textrm{h}}})\right),\label{eq:horizonarray}
\end{align}
where $c$ is the speed of light, and $\varphi_{m_{\textrm{h}}}=2\pi(m_{\textrm{h}}-1)/M_{\textrm{h}}$
is the difference between the central angles of the $m_{\textrm{h}}$-th
antenna and the first antenna of each UCA.

\section{Hybrid Directional Beamforming Design}

In this section, we design the analog and digital beamforming matrices,
$\mathbf{B}_{\textrm{ab}}$ and $\mathbf{B}_{\textrm{db},m_{\textrm{f}},m_{\textrm{b}}}$,
for the hybrid directional beamformer, as the first step shown in
Fig. 1. The number of required RF chains is reduced while the angular
resolution of the UCyA is not compromised as compared to its fully
digital counterparts.

We decouple $\mathbf{B}_{\textrm{ab}}$ between the vertical and horizontal
planes, i.e., $\mathbf{B}_{\textrm{ab}}=\mathbf{B}_{\textrm{vab}}\otimes\mathbf{B}_{\textrm{hab}}$
with $\mathbf{B}_{\textrm{vab}}\in\mathbb{C}^{M_{\textrm{v}}\times M_{\textrm{vr}}}$
and $\mathbf{B}_{\textrm{hab}}\in\mathbb{C}^{M_{\textrm{h}}\times M_{\textrm{hr}}}$.
By decoupling the beamformers into the Kronecker products of horizontal
and vertical matrices, we preserve the shift-invariance relations
on the vertical and horizontal planes, as will be revealed later in
Section V. To maintain the angular resolution of the hybrid UCyA,
we design $\mathbf{B}_{\textrm{hab}}$ based on the following theorem.

\begin{theorem} Suppose that $M_{\textrm{h}}\geq\left\lfloor 4\pi f_{m_{\textrm{f}}}r/c\right\rfloor $.
The array response vector $\mathbf{a}_{\textrm{h},m_{\textrm{f}},m_{\textrm{b}}}(\theta_{k_{m_{\textrm{b}}}},\phi_{k_{m_{\textrm{b}}}})$
can be transformed into a beamspace by using Q-DFT. If the index for
a beamspace dimension, $p$, is larger than $\left\lfloor 2\pi f_{m_{\textrm{f}}}r/c\right\rfloor $,
the element in the dimension is negligible and can be suppressed.
The expression for the elements in the other dimensions is given by:
\textit{
\begin{align}
 & a_{\textrm{QDFT},p,m_{\textrm{f}},m_{\textrm{b}}}(\theta_{k_{m_{\textrm{b}}}},\phi_{k_{m_{\textrm{b}}}})\nonumber \\
 & \approx\sqrt{M_{\textrm{h}}}j^{p}J_{p}\left(\gamma_{m_{\textrm{f}}}(\theta_{k_{m_{\textrm{b}}}})\right)\exp\left(-jp\phi_{k_{m_{\textrm{b}}}}\right),\label{eq:Theorem1}
\end{align}
}where $\gamma_{m_{\textrm{f}}}(\theta_{k_{m_{\textrm{b}}}})=2\pi f_{m_{\textrm{f}}}r\sin(\theta_{k_{m_{\textrm{b}}}})/c$,
$p=-P,-P+1,\ldots,P$, and $J_{p}\left(\gamma_{m_{\textrm{f}}}(\theta_{k_{m_{\textrm{b}}}})\right)$
is the Bessel function of the first kind of order $p$. \end{theorem}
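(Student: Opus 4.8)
The plan is to realize the Q-DFT as a phase-mode (beamspace) transform of the UCA and to reduce the claim to the classical Jacobi--Anger expansion combined with the rapid decay of Bessel functions whose order exceeds their argument.

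First I would fix $m_{\textrm{f}}$, $m_{\textrm{b}}$ and a path, and abbreviate $\theta=\theta_{k_{m_{\textrm{b}}}}$, $\phi=\phi_{k_{m_{\textrm{b}}}}$, $z=\gamma_{m_{\textrm{f}}}(\theta)$, so that by \eqref{eq:horizonarray} the $m_{\textrm{h}}$-th entry of $\mathbf{a}_{\textrm{h},m_{\textrm{f}},m_{\textrm{b}}}$ equals $M_{\textrm{h}}^{-1/2}\exp(jz\cos(\phi-\varphi_{m_{\textrm{h}}}))$ with $\varphi_{m_{\textrm{h}}}=2\pi(m_{\textrm{h}}-1)/M_{\textrm{h}}$. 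Inserting the Jacobi--Anger identity $e^{jz\cos\psi}=\sum_{n=-\infty}^{\infty}j^{n}J_{n}(z)e^{jn\psi}$ with $\psi=\phi-\varphi_{m_{\textrm{h}}}$ writes each antenna entry as an infinite superposition of the spatial harmonics $e^{-jn\varphi_{m_{\textrm{h}}}}$, the $n$-th one carrying the weight $M_{\textrm{h}}^{-1/2}j^{n}J_{n}(z)e^{jn\phi}$.

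Next I would define the $p$-th Q-DFT output as the inner product of the length-$M_{\textrm{h}}$ antenna vector with the Vandermonde weight $\{e^{-jp\varphi_{m_{\textrm{h}}}}\}_{m_{\textrm{h}}=1}^{M_{\textrm{h}}}$, interchange the finite sum over $m_{\textrm{h}}$ with the series over $n$, and apply orthogonality of the $M_{\textrm{h}}$-th roots of unity: $\sum_{m_{\textrm{h}}=1}^{M_{\textrm{h}}}e^{-j(p+n)\varphi_{m_{\textrm{h}}}}$ equals $M_{\textrm{h}}$ when $M_{\textrm{h}}\mid(p+n)$ and $0$ otherwise. This collapses the series onto the aliased index set $n=-p+\ell M_{\textrm{h}}$, $\ell\in\mathbb{Z}$, i.e. $a_{\textrm{QDFT},p,m_{\textrm{f}},m_{\textrm{b}}}=\sqrt{M_{\textrm{h}}}\sum_{\ell\in\mathbb{Z}}j^{-p+\ell M_{\textrm{h}}}J_{-p+\ell M_{\textrm{h}}}(z)\,e^{j(-p+\ell M_{\textrm{h}})\phi}$. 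The principal ($\ell=0$) term, after using $J_{-p}(z)=(-1)^{p}J_{p}(z)$ together with $j^{-p}(-1)^{p}=j^{p}$, is exactly $\sqrt{M_{\textrm{h}}}\,j^{p}J_{p}(z)e^{-jp\phi}$, which is the right-hand side of \eqref{eq:Theorem1}.

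It then remains to show that the discarded contributions are negligible, namely (i) the retained modes with $|p|>\lfloor 2\pi f_{m_{\textrm{f}}}r/c\rfloor$ and (ii) the aliasing terms $\ell\neq0$. Both reduce to the single fact that $J_{n}(z)$ decays super-exponentially once the order $|n|$ exceeds the argument $z$ --- quantitatively via $|J_{n}(z)|\le(|z|/2)^{|n|}/|n|!$. Since $z=\gamma_{m_{\textrm{f}}}(\theta)\le 2\pi f_{m_{\textrm{f}}}r/c$, part (i) is immediate; and for $\ell\neq0$ the aliased order satisfies $|{-p+\ell M_{\textrm{h}}}|\ge M_{\textrm{h}}-|p|$, so the hypothesis $M_{\textrm{h}}\ge\lfloor 4\pi f_{m_{\textrm{f}}}r/c\rfloor$ with $|p|\le P=\lfloor 2\pi f_{m_{\textrm{f}}}r/c\rfloor$ forces that order to at least $\approx 2\pi f_{m_{\textrm{f}}}r/c\ge z$, i.e. every alias falls in the same negligible regime as (i). Dropping both families of terms leaves precisely the $\ell=0$ term \eqref{eq:Theorem1}. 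The delicate point, on which I would concentrate, is exactly this decay estimate: making ``negligible'' quantitative (pinning the order beyond which a Bessel tail sits below the approximation/noise floor) and confirming that the geometric condition $M_{\textrm{h}}\ge\lfloor 4\pi f_{m_{\textrm{f}}}r/c\rfloor$ --- roughly ``two elements per wavelength of the UCA circumference'', equivalently that the $2P+1$ retained mode indices stay distinct modulo $M_{\textrm{h}}$ --- genuinely pushes every alias past it. The remaining steps are routine Jacobi--Anger bookkeeping and roots-of-unity orthogonality.
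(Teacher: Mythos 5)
Your proposal follows essentially the same route as the paper's Appendix II: expand each element via the Jacobi--Anger identity, apply roots-of-unity orthogonality so the infinite series collapses onto the aliased orders $QM_{\textrm{h}}-p$, isolate the $Q=0$ term using $J_{-p}(x)=(-1)^{p}J_{p}(x)$, and discard the remaining high-order Bessel terms. The only substantive difference is the tail estimate: you invoke $|J_{n}(z)|\le(|z|/2)^{|n|}/|n|!$, which only becomes small once $|n|\gtrsim ez/2$ and is therefore slightly too weak right at the threshold $M_{\textrm{h}}\approx 2P$, $z\approx P$, whereas the paper instead uses the monotonicity $J_{v_{1}}(v_{1}\rho)<J_{v_{2}}(v_{2}\rho)$ for $v_{1}>v_{2}$, $\rho\in(0,1)$, together with the stronger standing assumption $M_{\textrm{h}}\gg P$ and a numerical illustration ($\alpha=3$, $\beta=0.5$) --- both arguments are heuristic to the same degree at the boundary, so this is a quibble rather than a gap.
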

\begin{proof} See Appendix II. \end{proof}

Theorem 1 shows that, with the application of Q-DFT \cite{book11},
the $M_{\textrm{h}}$-dimensional array response vector of each UCA,
$\mathbf{a}_{\textrm{h},m_{\textrm{f}},m_{\textrm{b}}}(\theta_{k_{m_{\textrm{b}}}},\phi_{k_{m_{\textrm{b}}}})$,
can be transformed to be $(2P+1)$-dimensional, where $P=\left\lfloor 2\pi f_{m_{\textrm{f}}}r/c\right\rfloor $.
As a result, only $M_{\textrm{hr}}=(2P+1)$ RF chains are required
on the horizontal plane. Specifically, according to Theorem 1, we
design $\mathbf{B}_{\textrm{hab}}$ as $\left[\mathbf{B}_{\textrm{hab}}\right]_{m_{\textrm{h}},m_{\textrm{hr}}+P+1}=e^{-j2\pi(m_{\textrm{h}}-1)m_{\textrm{hr}}/M_{\textrm{h}}},$
where $m_{\textrm{hr}}=-P,-P+1,\ldots,P$. We set $\mathbf{B}_{\textrm{vab}}=\mathbf{I}_{M_{\textrm{v}}}$
to preserve the recurrence relation between the UCAs, i.e., the shift-invariance
relation. The relation is crucial for the subspace-based estimation
algorithms, and exploited to estimate the elevation AOAs in this paper.
With this design, the number of required RF chains is only $M_{\textrm{bsr}}=M_{\textrm{vr}}M_{\textrm{hr}}=M_{\textrm{v}}(2P+1)$.

Then, we design the digital beamformer $\mathbf{B}_{\textrm{db},m_{\textrm{f}},m_{\textrm{b}}}$
as
\begin{align}
\mathbf{B}_{\textrm{db},m_{\textrm{f}},m_{\textrm{b}}}=\textrm{diag}\left(b_{\textrm{db},1,m_{\textrm{f}},m_{\textrm{b}}},\ldots,b_{\textrm{db},M_{\textrm{bsr}},m_{\textrm{f}},m_{\textrm{b}}}\right),\label{eq:digitalBF}
\end{align}
where $b_{\textrm{db},m_{\textrm{bsr}},m_{\textrm{f}},m_{\textrm{b}}}$
$(m_{\textrm{bsr}}=1,2,\ldots,M_{\textrm{bsr}})$ is the beamforming
weight coefficients. Since $\mathbf{B}_{\textrm{db},m_{\textrm{f}},m_{\textrm{b}}}$
is diagonal, we have $M_{\textrm{bsr}}=M_{\textrm{bsd}}$. Considering
that sweeping beams on both the vertical and horizontal planes would
take a longer time, we design the beamformers to sweep on the vertical
plane only, and operate omnidirectionally on the horizon plane. The
beamforming weight coefficients can be configured according to the
beamforming response, $P_{m_{\textrm{f}}}(\bar{\theta}_{m_{\textrm{b}}}),$
as given by
\begin{align}
P_{m_{\textrm{f}}}(\bar{\theta}_{m_{\textrm{b}}}) & =\mathbf{b}_{\textrm{db},m_{\textrm{f}},m_{\textrm{b}}}^{H}\mathbf{B}_{\textrm{ab}}^{H}\mathbf{a}_{\textrm{bs},m_{\textrm{f}},m_{\textrm{b}}}(\bar{\theta}_{m_{\textrm{b}}},\phi),
\end{align}
where
\begin{equation}
\mathbf{b}_{\textrm{db},m_{\textrm{f}},m_{\textrm{b}}}=\left[b_{\textrm{db},1,m_{\textrm{f}},m_{\textrm{b}}},\ldots,b_{\textrm{db},M_{\textrm{bsr}},m_{\textrm{f}},m_{\textrm{b}}}\right]^{T}
\end{equation}
is the normalized digital beamforming vector, i.e., $\mathbf{b}_{\textrm{db},m_{\textrm{f}},m_{\textrm{b}}}^{H}\mathbf{b}_{\textrm{db},m_{\textrm{f}},m_{\textrm{b}}}=1$,
and $\bar{\theta}_{m_{\textrm{b}}}$ is the $m_{\textrm{b}}$-th beamforming
sweeping direction. Assume that the vertical angular sweeping interval
is $\frac{\pi}{M_{\textrm{b}}}$. The elevation angle of the $m_{\textrm{b}}$-th
angular sample ranges from $\frac{\pi}{M_{\textrm{b}}}(m_{\textrm{b}}-1)$
to $\frac{\pi}{M_{\textrm{b}}}m_{\textrm{b}}$.

We also decouple the digital beamforming matrix $\mathbf{B}_{\textrm{db},m_{\textrm{f}},m_{\textrm{b}}}$
in \eqref{eq:digitalBF} between the vertical and horizontal planes,
i.e., $\mathbf{B}_{\textrm{db},m_{\textrm{f}},m_{\textrm{b}}}=\mathbf{B}_{\textrm{vdb},m_{\textrm{f}},m_{\textrm{b}}}\otimes\mathbf{B}_{\textrm{hdb},m_{\textrm{f}},m_{\textrm{b}}}$,
where $\mathbf{B}_{\textrm{vdb},m_{\textrm{f}},m_{\textrm{b}}}\in\mathbb{C}^{M_{\textrm{vd}}\times M_{\textrm{vd}}}$
and $\mathbf{B}_{\textrm{hdb},m_{\textrm{f}},m_{\textrm{b}}}\in\mathbb{C}^{M_{\textrm{hd}}\times M_{\textrm{hd}}}$
are diagonal matrices with elements $b_{\textrm{vdb},m_{\textrm{vd}},m_{\textrm{f}},m_{\textrm{b}}}$
and $b_{\textrm{hdb},m_{\textrm{hd}},m_{\textrm{f}},m_{\textrm{b}}}$,
respectively. Thus, after hybrid beamforming, the array steering vectors
$\mathbf{a}_{\textrm{bs},m_{\textrm{f}},m_{\textrm{b}}}(\theta_{k_{m_{\textrm{b}}}},\phi_{k_{m_{\textrm{b}}}})$
can be written as:
\begin{align}
 & \mathbf{a}_{\textrm{hd},m_{\textrm{f}},m_{\textrm{b}}}(\theta_{k_{m_{\textrm{b}}}},\phi_{k_{m_{\textrm{b}}}})=\mathbf{B}_{m_{\textrm{f}},m_{\textrm{b}}}^{H}\mathbf{a}_{\textrm{bs},m_{\textrm{f}},m_{\textrm{b}}}(\theta_{k_{m_{\textrm{b}}}},\phi_{k_{m_{\textrm{b}}}})\nonumber \\
 & =\left(\left(\mathbf{B}_{\textrm{vab}}\otimes\mathbf{B}_{\textrm{hab}}\right)\left(\mathbf{B}_{\textrm{vdb},m_{\textrm{f}},m_{\textrm{b}}}\otimes\mathbf{B}_{\textrm{hdb},m_{\textrm{f}},m_{\textrm{b}}}\right)\right)^{H}\nonumber \\
 & \qquad\times\mathbf{a}_{\textrm{bs},m_{\textrm{f}},m_{\textrm{b}}}(\theta_{k_{m_{\textrm{b}}}},\phi_{k_{m_{\textrm{b}}}})\nonumber \\
 & \stackrel{(\textrm{a})}{=}\left(\left(\mathbf{B}_{\textrm{vab}}\mathbf{B}_{\textrm{vdb},m_{\textrm{f}},m_{\textrm{b}}}\right)\otimes\left(\mathbf{B}_{\textrm{hab}}\mathbf{B}_{\textrm{hdb},m_{\textrm{f}},m_{\textrm{b}}}\right)\right)^{H}\nonumber \\
 & \qquad\times\mathbf{a}_{\textrm{bs},m_{\textrm{f}},m_{\textrm{b}}}(\theta_{k_{m_{\textrm{b}}}},\phi_{k_{m_{\textrm{b}}}})\nonumber \\
 & \stackrel{(\textrm{b})}{=}\left(\left(\mathbf{B}_{\textrm{vab}}\mathbf{B}_{\textrm{vdb},m_{\textrm{f}},m_{\textrm{b}}}\right)^{H}\otimes\left(\mathbf{B}_{\textrm{hab}}\mathbf{B}_{\textrm{hdb},m_{\textrm{f}},m_{\textrm{b}}}\right)^{H}\right)\nonumber \\
 & \qquad\times\left(\mathbf{a}_{\textrm{v},m_{\textrm{f}},m_{\textrm{b}}}(\theta_{k_{m_{\textrm{b}}}})\otimes\mathbf{a}_{\textrm{h},m_{\textrm{f}},m_{\textrm{b}}}(\theta_{k_{m_{\textrm{b}}}},\phi_{k_{m_{\textrm{b}}}})\right)\nonumber \\
 & =\mathbf{a}_{\textrm{vhb},m_{\textrm{f}},m_{\textrm{b}}}(\theta_{k_{m_{\textrm{b}}}})\otimes\mathbf{a}_{\textrm{hhb},m_{\textrm{f}},m_{\textrm{b}}}(\theta_{k_{m_{\textrm{b}}}},\phi_{k_{m_{\textrm{b}}}}),\label{eq:afterBF}
\end{align}
where $\mathbf{a}_{\textrm{vhb},m_{\textrm{f}},m_{\textrm{b}}}(\theta_{k_{m_{\textrm{b}}}})\in\mathbb{C}^{M_{\textrm{vd}}}$,
$\mathbf{a}_{\textrm{hhb},m_{\textrm{f}},m_{\textrm{b}}}(\theta_{k_{m_{\textrm{b}}}},\phi_{k_{m_{\textrm{b}}}})\in\mathbb{C}^{M_{\textrm{hd}}}$,
$M_{\textrm{vd}}=M_{\textrm{vr}}=M_{\textrm{v}}$, and $M_{\textrm{hd}}=M_{\textrm{hr}}=2P+1$.
In \eqref{eq:afterBF}, $(\textrm{a})$ and $(\textrm{b})$ are based
on two important properties of the Kronecker product, i.e., $(\mathbf{A}\otimes\mathbf{B})(\mathbf{C}\otimes\mathbf{D})=\mathbf{AC}\otimes\mathbf{BD}$
and $(\mathbf{A}\otimes\mathbf{B})^{H}=\mathbf{A}^{H}\otimes\mathbf{B}^{H}$
\cite{Arraysignalprocessingconceptsandtechniques}. We have
\begin{align}
\mathbf{a}_{\textrm{vhb},m_{\textrm{f}},m_{\textrm{b}}}(\theta_{k_{m_{\textrm{b}}}}) & =\mathbf{B}_{\textrm{vdb},m_{\textrm{f}},m_{\textrm{b}}}^{H}\mathbf{a}_{\textrm{v},m_{\textrm{f}},m_{\textrm{b}}}(\theta_{k_{m_{\textrm{b}}}}).\label{eq:Avhb}
\end{align}
According to Theorem 1, the $m_{\textrm{hd}}$-th element of $\mathbf{a}_{\textrm{hhb},m_{\textrm{f}},m_{\textrm{b}}}(\theta_{k_{m_{\textrm{b}}}},\phi_{k_{m_{\textrm{b}}}})$
is given by:
\begin{align}
 & a_{\textrm{hhb},m_{\textrm{hd}},m_{\textrm{f}},m_{\textrm{b}}}(\theta_{k_{m_{\textrm{b}}}},\phi_{k_{m_{\textrm{b}}}})\approx\sqrt{M_{\textrm{h}}}j^{m_{\textrm{hd}}}b_{\textrm{hdb},m_{\textrm{hd}},m_{\textrm{f}},m_{\textrm{b}}}\nonumber \\
 & \qquad\times J_{m_{\textrm{hd}}}\left(\gamma_{m_{\textrm{f}}}(\theta_{k_{m_{\textrm{b}}}})\right)\exp\left(-jm_{\textrm{hd}}\phi_{k_{m_{\textrm{b}}}}\right).\label{eq:Ahhb}
\end{align}

Given our hybrid beamforming design, we can present the beamspace
signals of the mmWave UCyA in a tensor form. Considering the observations
at all sweeping intervals, subcarriers and time frames, the beamspace
signals can be modeled as:
\begin{align}
 & x_{m_{\textrm{vd}},m_{\textrm{hd}},m_{\textrm{f}},m_{\textrm{t}},m_{\textrm{b}}}=\sum_{^{k_{m_{\textrm{b}}}=1}}^{K_{m_{\textrm{b}}}}\left(s_{m_{\textrm{t}},k_{m_{\textrm{b}}}}a_{\textrm{vhb},m_{\textrm{vd}},m_{\textrm{f}},m_{\textrm{b}}}(\theta_{k_{m_{\textrm{b}}}})\right.\nonumber \\
 & \left.\times a_{\textrm{hhb},m_{\textrm{hd}},m_{\textrm{f}},m_{\textrm{b}}}(\theta_{k_{m_{\textrm{b}}}},\phi_{k_{m_{\textrm{b}}}})a_{\textrm{fbs},m_{\textrm{f}},m_{\textrm{b}}}(\tau_{k_{m_{\textrm{b}}}})\right)\nonumber \\
 & +n_{m_{\textrm{vd}},m_{\textrm{hd}},m_{\textrm{f}},m_{\textrm{t}},m_{\textrm{b}}},\label{eq:measurementsample}
\end{align}
where $n_{m_{\textrm{vd}},m_{\textrm{hd}},m_{\textrm{f}},m_{\textrm{t}},m_{\textrm{b}}}$
is the additive noise.

We consider the samples from the $m_{\textrm{b}}$-th vertical sweeping
beam, and \eqref{eq:measurementsample} can be rewritten in the following
tensor form \cite{tensor1HoSVD}
\begin{equation}
\mathcal{X}_{:,:,:,:,m_{\textrm{b}}}=\mathcal{A}_{m_{\textrm{b}}}\times_{4}\mathbf{S}_{m_{\textrm{b}}}+\mathcal{N}_{m_{\textrm{b}}}\in\mathbb{C}^{M_{\textrm{vd}}\times M_{\textrm{hd}}\times M{}_{\textrm{f}}\times M_{\textrm{t}}},\label{eq:tensormodel}
\end{equation}
where all the angle and delay parameters at the $m_{\textrm{b}}$-th
sweeping beam are collected in the space-time response tensor $\mathcal{A}_{m_{\textrm{b}}}\in\mathbb{C}^{M_{\textrm{vd}}\times M_{\textrm{hd}}\times M{}_{\textrm{f}}\times K_{m_{\textrm{b}}}}$;
$\mathbf{S}_{m_{\textrm{b}}}\in\mathbb{C}^{M_{\textrm{t}}\times K_{m_{\textrm{b}}}}$
collects the received symbols $s_{m_{\textrm{t}},k_{m_{\textrm{b}}}}$;
and $\mathcal{N}_{m_{\textrm{b}}}\in\mathbb{C}^{M_{\textrm{vd}}\times M_{\textrm{hd}}\times M{}_{\textrm{f}}\times M_{\textrm{t}}}$
collects the noise samples.

\section{Low-Complexity Coherent Preprocessing for Wideband Signals}

As the second step in Fig. 1, a new low-complexity UCAMI is developed
in this section to suppress the beam squinting effect and enable coherent
combining of measurement signals across wideband. The conventional
UCAMI \cite{UCAM} needs to solve a computationally expensive multi-dimensional
optimization problem whose dimension is equal to the number of estimation
parameters. Different from the conventional UCAMI, there are only
1-D problems in our proposed approach.

As shown in \eqref{eq:Avhb} and \eqref{eq:Ahhb}, the array steering
vectors depend on the frequency and so do the beamspace signals. As
a consequence, the signals can suffer from the beam squinting effect,
due to the wide bandwidth of mmWave signals. It is critical to preprocess
the beamspace signals in order to suppress the frequency dependence
of the array steering vectors. The suppression of frequency dependence
is performed by designing the so-called \textit{focusing matrix},
which focuses the array steering vectors at each frequency to a reference
frequency, denoted by $f_{0}$ \cite{ICCMwhole,ICCMearly}. From \eqref{eq:Ahhb},
we see that after being processed by the RF network, the subcarrier
frequency $f_{m_{\textrm{f}}}$ in \eqref{eq:horizonarray} is transformed
into the Bessel function, $J_{m_{\textrm{hd}}}\left(\gamma_{m_{\textrm{f}}}(\theta_{k_{m_{\textrm{b}}}})\right)$,
which only depends on $f_{m_{\textrm{f}}}$ and $\theta_{k_{m_{\textrm{b}}}}$,
and is independent of $\phi_{k_{m_{\textrm{b}}}}$. We only need to
optimize the focusing matrices in the elevation angular domain, since
$f_{m_{\textrm{f}}}$ is decoupled from the azimuth angle $\phi_{k_{m_{\textrm{b}}}}$
in \eqref{eq:Ahhb}. Moreover, by taking the vertical array steering
vector in \eqref{eq:Avhb} into consideration, we find that both $J_{m_{\textrm{hd}}}\left(\gamma_{m_{\textrm{f}}}(\theta_{k_{m_{\textrm{b}}}})\right)$
and $a_{\textrm{vhb},m_{\textrm{f}},m_{\textrm{b}}}(\theta_{k_{m_{\textrm{b}}}})$
depend only on the elevation angle $\theta_{k_{m_{\textrm{b}}}}$.

We first design the optimization problem for the horizontal array
steering vectors in \eqref{eq:Ahhb}. Because the measurement samples
in \eqref{eq:tensormodel} are collected from the $M_{\textrm{b}}$
vertical sweeping beams, the optimization can be conducted in each
vertical angular sweeping interval separately. Define
\begin{equation}
\boldsymbol{g}_{m_{\textrm{f}}}(\theta)=\left[J_{-P}\left(\gamma_{m_{\textrm{f}}}(\theta)\right),J_{-P+1}\left(\gamma_{m_{\textrm{f}}}(\theta)\right),\ldots,J_{P}\left(\gamma_{m_{\textrm{f}}}(\theta)\right)\right]^{T},
\end{equation}
which collects all the Bessel functions in \eqref{eq:Ahhb} at the
$m_{\textrm{f}}$-th subcarrier. We discretize each sweeping interval
into $N_{\textrm{b}}$ elevation angular values. Then, the horizontal
factor matrices associated with the subcarrier frequency, $f_{m_{\textrm{f}}}$,
for the $m_{\textrm{b}}$-th sweeping interval can be written as:
\begin{equation}
\mathbf{G}_{\textrm{h},m_{\textrm{f}},m_{\textrm{b}}}=\left[\boldsymbol{g}_{m_{\textrm{f}}}(\theta_{m_{\textrm{b}},1}),\ldots,\boldsymbol{g}_{m_{\textrm{f}}}(\theta_{m_{\textrm{b}},N_{\textrm{b}}})\right],
\end{equation}
where $\theta_{m_{\textrm{b}},n_{\textrm{b}}}=\frac{\pi}{M_{\textrm{b}}}(m_{\textrm{b}}-1)+\frac{\pi}{M_{\textrm{b}}N_{\textrm{b}}}(n_{\textrm{b}}-1)$
is the discretized elevation angle.

We directly use $a_{\textrm{vhb},m_{\textrm{f}},m_{\textrm{b}}}(\theta_{k_{m_{\textrm{b}}}})$
to optimize the vertical array steering vectors by constructing
\begin{equation}
\mathbf{G}_{\textrm{v},m_{\textrm{f}},m_{\textrm{b}}}=\left[\mathbf{a}_{\textrm{vhb},m_{\textrm{f}}}(\theta_{m_{\textrm{b}},1}),\ldots,\mathbf{a}_{\textrm{vhb},m_{\textrm{f}}}(\theta_{m_{\textrm{b}},N_{\textrm{b}}})\right].
\end{equation}

We then obtain the focusing matrices on the vertical and horizontal
planes, denoted by $\mathbf{T}_{\textrm{v},m_{\textrm{f}},m_{\textrm{b}}}$
and $\mathbf{T}_{\textrm{h},m_{\textrm{f}},m_{\textrm{b}}}$, by formulating
the following optimization problems:
\begin{align}
\mathbf{T}_{\textrm{v},m_{\textrm{f}},m_{\textrm{b}}} & =\arg\min_{\mathbf{T}_{\textrm{v},m_{\textrm{f}},m_{\textrm{b}}}}\left\Vert \mathbf{T}_{\textrm{v},m_{\textrm{f}},m_{\textrm{b}}}\mathbf{G}_{\textrm{v},m_{\textrm{f}},m_{\textrm{b}}}-\mathbf{G}_{\textrm{v},m_{\textrm{f}0},m_{\textrm{b}}}\right\Vert _{\textrm{F}}^{2},\nonumber \\
 & \quad\textrm{s.t.}\quad\mathbf{T}_{\textrm{v},m_{\textrm{f}},m_{\textrm{b}}}^{H}\mathbf{T}_{\textrm{v},m_{\textrm{f}},m_{\textrm{b}}}=\mathbf{I}_{M_{\textrm{v}}};\label{eq:wideband1}
\end{align}
\begin{align}
\mathbf{T}_{\textrm{h},m_{\textrm{f}},m_{\textrm{b}}} & =\arg\min_{\mathbf{T}_{\textrm{h},m_{\textrm{f}},m_{\textrm{b}}}}\left\Vert \mathbf{T}_{\textrm{h},m_{\textrm{f}},m_{\textrm{b}}}\mathbf{G}_{\textrm{h},m_{\textrm{f}},m_{\textrm{b}}}-\mathbf{G}_{\textrm{h},m_{\textrm{f}0},m_{\textrm{b}}}\right\Vert _{\textrm{F}}^{2},\nonumber \\
 & \quad\textrm{s.t.}\quad\mathbf{T}_{\textrm{h},m_{\textrm{f}},m_{\textrm{b}}}^{H}\mathbf{T}_{\textrm{h},m_{\textrm{f}},m_{\textrm{b}}}=\mathbf{I}_{2P+1},\label{eq:wideband2}
\end{align}
where $m_{\textrm{f}0}$ is the index to the subcarriers at the reference
frequency $f_{0}$, and the constraints prevent focusing losses \cite{UCAM}.

The solutions to Problems \eqref{eq:wideband1} and \eqref{eq:wideband2}
are given by \cite{ICCMwhole}
\[
\mathbf{T}_{\textrm{v},m_{\textrm{f}},m_{\textrm{b}}}=\mathbf{V}_{\textrm{v},m_{\textrm{f}},m_{\textrm{b}}}\mathbf{U}_{\textrm{v},m_{\textrm{f}},m_{\textrm{b}}}^{H};
\]
\begin{equation}
\mathbf{T}_{\textrm{h},m_{\textrm{f}},m_{\textrm{b}}}=\mathbf{V}_{\textrm{h},m_{\textrm{f}},m_{\textrm{b}}}\mathbf{U}_{\textrm{h},m_{\textrm{f}},m_{\textrm{b}}}^{H},\label{eq:ti}
\end{equation}
where the columns of $\mathbf{U}_{\textrm{v},m_{\textrm{f}},m_{\textrm{b}}}$
$($or $\mathbf{U}_{\textrm{h},m_{\textrm{f}},m_{\textrm{b}}})$ and
$\mathbf{V}_{\textrm{v},m_{\textrm{f}},m_{\textrm{b}}}$ $($or $\mathbf{V}_{\textrm{h},m_{\textrm{f}},m_{\textrm{b}}})$
are the left and right singular vectors of $\mathbf{G}_{\textrm{v},m_{\textrm{f}},m_{\textrm{b}}}\mathbf{G}_{\textrm{v},m_{\textrm{f0}},m_{\textrm{b}}}^{H}$
$($or $\mathbf{G}_{\textrm{h},m_{\textrm{f}},m_{\textrm{b}}}\mathbf{G}_{\textrm{h},m_{\textrm{f0}},m_{\textrm{b}}}^{H})$,
respectively.

We construct $\widetilde{b}_{\textrm{f},m_{\textrm{b}}}=b_{\textrm{f},m_{\textrm{b}}}^{-1}$,
$\widetilde{\mathbf{B}}_{\textrm{v},m_{\textrm{f}},m_{\textrm{b}}}=\mathbf{B}_{\textrm{vdb},m_{\textrm{f}},m_{\textrm{b}}}^{-1},$
and $\mathbf{\widetilde{B}}_{\textrm{h},m_{\textrm{f}},m_{\textrm{b}}}=\mathbf{B}_{\textrm{hdb},m_{\textrm{f}},m_{\textrm{b}}}^{-1}$
to offset the impact of beam sweeping on the received signals. The
focusing matrices \eqref{eq:ti} suppress the frequency dependence
of the array steering vectors. After this coherent wideband processing,
in the $m_{\textrm{b}}$-th sweeping beam, the received signal at
the $m_{\textrm{f}}$-th subcarrier in \eqref{eq:tensormodel} can
be calculated as
\begin{align}
\mathcal{\widetilde{\mathcal{X}}}_{:,:,m_{\textrm{f}},:,m_{\textrm{b}}} & =\mathcal{\mathcal{X}}_{:,:,m_{\textrm{f}},:,m_{\textrm{b}}}\widetilde{b}_{\textrm{f},m_{\textrm{b}}}\times_{1}\left(\mathbf{T}_{\textrm{v},m_{\textrm{f}},m_{\textrm{b}}}\widetilde{\mathbf{B}}_{\textrm{v},m_{\textrm{f}},m_{\textrm{b}}}\right)\nonumber \\
 & \times_{2}\left(\mathbf{T}_{\textrm{h},m_{\textrm{f}},m_{\textrm{b}}}\mathbf{\widetilde{B}}_{\textrm{h},m_{\textrm{f}},m_{\textrm{b}}}\right).\label{COHERENT}
\end{align}

The elements of $\mathcal{\widetilde{\mathcal{X}}}_{:,:,m_{\textrm{f}},:,m_{\textrm{b}}}$
can be expressed as
\begin{align}
 & \widetilde{x}_{m_{\textrm{vd}},m_{\textrm{hd}},m_{\textrm{f}},m_{\textrm{t}},m_{\textrm{b}}}=\sum_{^{k_{m_{\textrm{b}}}=1}}^{K_{m_{\textrm{b}}}}s_{m_{\textrm{t}},k_{m_{\textrm{b}}}}\widetilde{a}_{\textrm{hhb},m_{\textrm{hd}},m_{\textrm{b}}}(\theta_{k_{m_{\textrm{b}}}},\phi_{k_{m_{\textrm{b}}}})\nonumber \\
 & \times\widetilde{a}_{\textrm{vhb},m_{\textrm{vd}},m_{\textrm{b}}}(\theta_{k_{m_{\textrm{b}}}})a_{\textrm{f},m_{\textrm{f}}}(\tau_{k_{m_{\textrm{b}}}})+\widetilde{n}_{m_{\textrm{vd}},m_{\textrm{hd}},m_{\textrm{f}},m_{\textrm{t}},m_{\textrm{b}}},\label{eq:ELEMENTCOHERENT}
\end{align}
where $\widetilde{a}_{\textrm{vhb},m_{\textrm{vd}},m_{\textrm{b}}}(\theta_{k_{m_{\textrm{b}}}})$
and $\widetilde{a}_{\textrm{hhb},m_{\textrm{hd}},m_{\textrm{b}}}(\theta_{k_{m_{\textrm{b}}}},\phi_{k_{m_{\textrm{b}}}})$
are the resultant array manifolds in \eqref{eq:measurementsample}.
$\widetilde{n}_{m_{\textrm{vd}},m_{\textrm{hd}},m_{\textrm{f}},m_{\textrm{t}},m_{\textrm{b}}}$
is the transformed noise sample, which still yields the zero-mean
Gaussian distribution due to the constraints on the beamforming weights
and focusing matrices.

We note that there are two-dimensional variables, $\phi_{k_{m_{\textrm{b}}}}$
and $\theta_{k_{m_{\textrm{b}}}}$, in the frequency-dependent array
steering vectors $\mathbf{a}_{\textrm{v},m_{\textrm{f}},m_{\textrm{b}}}(\theta_{k_{m_{\textrm{b}}}})$
and $\mathbf{a}_{\textrm{h},m_{\textrm{f}},m_{\textrm{b}}}(\theta_{k_{m_{\textrm{b}}}},\phi_{k_{m_{\textrm{b}}}})$.
UCAMI \cite{UCAM} would have to optimize the focusing matrices on
the elevation and azimuth angular domains simultaneously, resulting
in a two-dimensional problem with a high complexity. In contrast,
our proposed method only needs a one-dimensional optimization problem,
i.e., \eqref{eq:wideband1} and \eqref{eq:wideband2}, reducing the
complexity significantly.

\section{Tensor-based Parameter Estimation}

With the received signals preprocessed (in Sections III and IV), the
resultant array steering vectors are frequency-independent in \eqref{eq:ELEMENTCOHERENT}.
Only the delay-dependent factor, $a_{\textrm{f},m_{\textrm{f}}}(\tau_{k_{m_{\textrm{b}}}})$,
depends on the carrier frequency. In this section, we propose a new
tensor-based joint delay-angle estimation algorithm which is the last
step in Fig. 1, and a new spatial smoothing method which is the second-to-last
(optional) step in the figure. Despite the use of the existing truncated
HOSVD, the proposed joint delay-angle estimation algorithm involves
new estimation processes. Specifically, the matrix TLS problem formulation
is generalized to the tensor case. The azimuth angles are estimated
by substituting the estimated elevation angles, which avoids potential
mismatches between the estimated results of the elevation and azimuth
AOAs. By revealing and exploiting the recurrence relations between
the UCAs at different layers of the UCyA, the proposed spatial smoothing
method is developed to decorrelate the coherent signals to correctly
decompose the signal and noise subspaces in all dimensions. The computational
complexity of the proposed algorithm is analyzed at the end.

\subsection{Truncated HOSVD Model of Measurement Samples}

With no a-priori knowledge of the number of signals in each sweeping
beam, $K_{m_{\textrm{b}}}$, we collect all the sweeping results in
\eqref{eq:ELEMENTCOHERENT} to jointly process the signals from the
$K$ signal sources. The element of the received signal tensor model
is given by
\begin{align}
 & y_{m_{\textrm{v}},m_{\textrm{p}},m_{\textrm{f}},m_{\textrm{t}}}=\sum_{m_{\textrm{b}}=1}^{M_{\textrm{b}}}\widetilde{x}_{m_{\textrm{vd}},m_{\textrm{hd}},m_{\textrm{f}},m_{\textrm{t}},m_{\textrm{b}}}\nonumber \\
 & =\sum_{m_{\textrm{b}}=1}^{M_{\textrm{b}}}\left[\sum_{^{k_{m_{\textrm{b}}}=1}}^{K_{m_{\textrm{b}}}}\widetilde{a}_{\textrm{vhb},m_{\textrm{vd}},m_{\textrm{b}}}(\theta_{k_{m_{\textrm{b}}}})\widetilde{a}_{\textrm{hhb},m_{\textrm{hd}},m_{\textrm{b}}}(\theta_{k_{m_{\textrm{b}}}},\phi_{k_{m_{\textrm{b}}}})\right.\nonumber \\
 & \left.\times a_{\textrm{f},m_{\textrm{f}}}(\tau_{k_{m_{\textrm{b}}}})s_{m_{\textrm{t}},k_{m_{\textrm{b}}}}+\widetilde{n}_{m_{\textrm{v}},m_{\textrm{p}},m_{\textrm{f}},m_{\textrm{b}},m_{\textrm{t}}}\vphantom{\sum_{^{k_{m_{\textrm{b}}}=1}}^{K_{m_{\textrm{b}}}}}\right]=\sum_{^{k=1}}^{K}\widetilde{a}_{\textrm{vhb},m_{\textrm{vd}}}(\theta_{k})\nonumber \\
 & \times\widetilde{a}_{\textrm{hhb},m_{\textrm{hd}}}(\theta_{k},\phi_{k})a_{\textrm{f},m_{\textrm{f}}}(\tau_{k})s_{m_{\textrm{t}},k}+\dot{n}_{m_{\textrm{v}},m_{\textrm{p}},m_{\textrm{f}},m_{\textrm{t}}},\label{eq:element}
\end{align}
which can be expressed concisely as:
\begin{equation}
\mathcal{Y}=\sum_{m_{\textrm{b}}=1}^{M_{\textrm{b}}}\mathcal{\widetilde{\mathcal{X}}}_{:,:,:,:,m_{\textrm{b}}}=\widetilde{\mathcal{A}}\times_{4}\mathbf{S}+\mathcal{\dot{N}}\in\mathbb{C}^{M_{\textrm{vd}}\times M_{\textrm{hd}}\times M{}_{\textrm{f}}\times M_{\textrm{t}}},\label{eq:modeltensorfinal}
\end{equation}
where $\mathbf{S}=[\mathbf{S}_{1},\mathbf{S}_{2},\ldots,\mathbf{S}_{M_{\textrm{b}}}]\in\mathbb{C}^{M_{\textrm{t}}\times K}$
collects all the symbols and $\mathcal{\dot{N}}=\sum_{m_{\textrm{b}}=1}^{M_{\textrm{b}}}\mathcal{N}_{m_{\textrm{b}}}$
collects all noise samples. An illustration of the received signal
tensor model is shown in Fig. \ref{fig:(a)-System-configuration;}(b).
In \eqref{eq:modeltensorfinal}, $\widetilde{\mathcal{A}}\in\mathbb{C}^{M_{\textrm{vd}}\times M_{\textrm{hd}}\times M{}_{\textrm{f}}\times K}$
is known as the space-time response tensor \cite{SCIVanderveen},
and obtained by concatenating the $K$ response tensors, $\mathcal{\widetilde{\mathcal{A}}}_{\textrm{\ensuremath{k}}}\in\mathbb{C}^{M_{\textrm{vd}}\times M_{\textrm{hd}}\times M{}_{\textrm{f}}}$,
as given by:
\begin{equation}
\widetilde{\mathcal{A}}=\left[\widetilde{\mathcal{A}}_{1}\sqcup_{4}\widetilde{\mathcal{A}}_{2}\sqcup_{4}\ldots\sqcup_{4}\mathcal{\widetilde{\mathcal{A}}}_{\textrm{\ensuremath{K}}}\right].\label{eq:STresponsetensor}
\end{equation}
Because the array steering vectors are frequency-independent after
the coherent wideband preprocessing (as described in Section IV),
the space-time response tensor of the $k$-th signal source, $\mathcal{\widetilde{\mathcal{A}}}_{\textrm{\ensuremath{k}}}$,
is given by
\begin{equation}
\mathcal{\widetilde{\mathcal{A}}}_{\textrm{\ensuremath{k}}}=\mathbf{\widetilde{a}}_{\textrm{vhb}}(\theta_{k})\circ\widetilde{\mathbf{a}}_{\textrm{hhb}}(\theta_{k},\phi_{k})\circ\mathbf{a}_{\textrm{f}}(\tau_{k}),\label{eq:signaltensor}
\end{equation}
where $\left[\mathbf{\widetilde{a}}_{\textrm{vhb}}(\theta_{k})\right]_{m_{\textrm{vd}},1}=\widetilde{a}_{\textrm{vdb},m_{\textrm{vd}}}(\theta_{k})$,
$\left[\mathbf{\widetilde{a}}_{\textrm{hhb}}(\theta_{k},\phi_{k})\right]_{m_{\textrm{hd}},1}=\widetilde{a}_{\textrm{hhb},m_{\textrm{hd}}}(\theta_{k},\phi_{k}),$
and $\left[\mathbf{a}_{\textrm{f}}(\tau_{k})\right]_{m_{\textrm{f}},1}=a_{\textrm{f},m_{\textrm{f}}}(\tau_{k})$.

By substituting \eqref{eq:STresponsetensor} and \eqref{eq:signaltensor}
into \eqref{eq:modeltensorfinal}, we obtain
\begin{equation}
\mathcal{Y}=\sum_{^{k=1}}^{K}\mathbf{\widetilde{a}}_{\textrm{vhb}}(\theta_{k})\circ\widetilde{\mathbf{a}}_{\textrm{hhb}}(\theta_{k},\phi_{k})\circ\mathbf{a}_{\textrm{f}}(\tau_{k})\circ\mathbf{s}_{k}+\mathcal{\dot{N}},\label{eq:tensorrank}
\end{equation}
where $\mathbf{s}_{k}=[\mathbf{S}]_{:,k}$. \eqref{eq:tensorrank}
indicates that, in a noiseless case, $\mathcal{Y}$ can be regarded
as the sum of $K$ rank-one tensors. Therefore, \eqref{eq:tensorrank}
is the CP decomposition of $\mathcal{Y}$ (see Property 3 in Appendix
I). $\textrm{Rank}(\mathcal{Y})=K$\footnote{According to \eqref{eq:tensorrank}, we have $\textrm{Rank}(\mathcal{Y})\leq K$.
$\textrm{Rank}(\mathcal{Y})<K$ only occurs when the locations of
two coherent sources are the same, which rarely happens. }. According to Property 3 in Appendix I, \eqref{eq:tensorrank} can
be written as
\begin{align}
\mathcal{Y} & =\left\llbracket \mathcal{Z}_{\textrm{s}};\mathbf{\widetilde{A}}_{\textrm{vhb}},\mathbf{\widetilde{A}}_{\textrm{hhb}},\mathbf{A}_{\textrm{f}},\mathbf{S}\right\rrbracket +\mathcal{\dot{N}}\label{eq:cp}
\end{align}
where $\left[\mathbf{\widetilde{A}}_{\textrm{vhb}}\right]_{:,k}=\mathbf{\widetilde{a}}_{\textrm{vhb}}(\theta_{k}),$
$\left[\mathbf{\widetilde{A}}_{\textrm{hhb}}\right]_{:,k}=\widetilde{\mathbf{a}}_{\textrm{hhb}}(\theta_{k},\phi_{k})$,
$\left[\mathbf{A}_{\textrm{f}}\right]_{:,k}=\mathbf{a}_{\textrm{f}}(\tau_{k})$,
and $\mathcal{Z}_{\textrm{s}}\in\mathbb{C}^{K\times K\times K\times K}$
is an identity superdiagonal tensor.

Given the typically sparse multipath propagation of mmWave, the number
of received paths is much smaller than the numbers of antennas, subcarriers,
and time frames, i.e., $K<\min(M_{\textrm{vd}},M_{\textrm{hd}},$
$M_{\textrm{f}},M_{\textrm{t}})$. Thus, the ranks of $\mathbf{\widetilde{A}}_{\textrm{vhb}}$,
$\mathbf{\widetilde{A}}_{\textrm{hhb}}$, $\mathbf{A}_{\textrm{f}}$
and $\mathbf{S}$ are all $K$. According to the CP model \eqref{eq:cp},
in the presence of non-negligible noises, $\mathbf{\widetilde{A}}_{\textrm{vhb}}$,
$\mathbf{\widetilde{A}}_{\textrm{hhb}}$, $\mathbf{A}_{\textrm{f}}$
and $\mathbf{S}$ correspond to the factor matrix of the measurement
tensor $\mathcal{Y}$ in each mode. The ranks of the mode-$n$ unfoldings
of tensor $\mathcal{Y}$, i.e., the $n$-ranks of $\mathcal{Y}$ $(n=1,2,3,4)$,
are all $K$.

As a high-dimensional generalization of matrix SVD, the HOSVD (see
Property 2 in Appendix I) conducts the SVD of the unfolding of $\mathcal{Y}$
in each mode separately. This can suppress the received noise in each
mode. The HOSVD of the measurement tensor $\mathcal{Y}$ is given
by
\begin{equation}
\mathcal{Y}=\mathcal{L}\times_{1}\mathbf{U}_{\textrm{v}}\times_{2}\mathbf{U}_{\textrm{h}}\times_{3}\mathbf{U}_{\textrm{f}}\times_{4}\mathbf{U}_{\textrm{t}}=\left\llbracket \mathcal{L};\mathbf{U}_{\textrm{v}},\mathbf{U}_{\textrm{h}},\mathbf{U}_{\textrm{f}},\mathbf{U}_{\textrm{t}}\right\rrbracket ,\label{eq:hosvd}
\end{equation}
where the unitary matrices, $\mathbf{U}_{\textrm{v}}\in\mathbb{C}^{M_{\textrm{vd}}\times M_{\textrm{vd}}}$,
$\mathbf{U}_{\textrm{h}}\in\mathbb{C}^{M_{\textrm{hd}}\times M_{\textrm{hd}}}$,
$\mathbf{U}_{\textrm{f}}\in\mathbb{C}^{M_{\textrm{f}}\times M_{\textrm{f}}}$,
and $\mathbf{U}_{\textrm{t}}\in\mathbb{C}^{M_{\textrm{t}}\times M_{\textrm{t}}}$,
are the left singular matrices of the mode-$n$ unfoldings of tensor
$\mathcal{Y},$ and the core tensor $\mathcal{L}\in\mathbb{C}^{M_{\textrm{vd}}\times M_{\textrm{hd}}\times M{}_{\textrm{f}}\times M_{\textrm{t}}}$
is obtained by moving the singular matrices to the left-hand side
of \eqref{eq:hosvd}:
\begin{equation}
\mathcal{L}=\mathcal{Y}\times_{1}\mathbf{U}_{\textrm{v}}^{H}\times_{2}\mathbf{U}_{\textrm{h}}^{H}\times_{3}\mathbf{U}_{\textrm{f}}^{H}\times_{4}\mathbf{U}_{\textrm{t}}^{H}.
\end{equation}

Because the $n$-ranks of tensor $\mathcal{Y}$ are $K$, the SVD
of the mode-1 unfolding $\mathbf{Y}_{(1)}\in\mathbb{C}^{M_{\textrm{vd}}\times(M/M_{\textrm{\textrm{v}d}})}$
can be written as
\begin{align}
 & \mathbf{Y}_{(1)}=\mathbf{U}_{\textrm{v}}\mathbf{\mathbf{\Sigma}}_{\textrm{v}}\mathbf{\mathbf{V}}_{\textrm{v}}^{H}=\nonumber \\
 & \left[\mathbf{\mathbf{U}}_{\textrm{v},\textrm{s}}\;\mathbf{U}_{\textrm{v},\textrm{n}}\right]\left[\begin{array}{cc}
\mathbf{\mathbf{\Sigma}}_{\textrm{v},\textrm{s}} & \mathbf{0}_{K\times(\frac{M}{M_{\textrm{\textrm{v}d}}}-K_{\textrm{vd}})}\\
\mathbf{0}_{(M_{\textrm{vd}}-K_{\textrm{vd}})\times K_{\textrm{vd}}} & \mathbf{\mathbf{\Sigma}}_{\textrm{v},\textrm{n}}
\end{array}\right]\left[\mathbf{\mathbf{V}}_{\textrm{v},\textrm{s}}\;\mathbf{V}_{\textrm{v},\textrm{n}}\right]^{H},\label{eq:subspace}
\end{align}
where $K_{\textrm{vd}}=\min(K,M_{\textrm{vd}})$ and $M=M_{\textrm{vd}}M_{\textrm{hd}}M{}_{\textrm{f}}M_{\textrm{t}}$.
The signal subspace $\mathbf{\mathbf{U}}_{\textrm{v},\textrm{s}}\in\mathbb{C}^{M_{\textrm{vd}}\times K_{\textrm{vd}}}$
and the noise subspace $\mathbf{U}_{\textrm{v},\textrm{n}}\in\mathbb{C}^{M_{\textrm{vd}}\times(M_{\textrm{vd}}-K_{\textrm{vd}})}$
of the mode-1 unfolding $\mathbf{Y}_{(1)}$ correspond to the $K_{\textrm{vd}}$
largest and the $(M_{\textrm{vd}}-K_{\textrm{vd}})$ smallest elements
of the diagonal matrix $\mathbf{\mathbf{\Sigma}}_{\textrm{v}}=\textrm{diag}(\sigma_{\textrm{v},1},\sigma_{\textrm{v},2},\ldots,$
$\sigma_{\textrm{v},M_{\textrm{vd}}})$, respectively. $\sigma_{\textrm{v},1},\sigma_{\textrm{v},2},\ldots,\sigma_{\textrm{v},M_{\textrm{vd}}}$
are the non-zero singular values of the mode-1 unfolding $\mathbf{Y}_{(1)}$,
and calculated by $\sigma_{\textrm{v},m_{\textrm{vd}}}=\left\Vert \mathcal{L}_{m_{\textrm{vd}},:,:,:,:}\right\Vert $.
The signal subspace matrices of the mode-2,3,4 unfoldings of $\mathcal{Y}$,
i.e., $\mathbf{U}_{\textrm{h,s}}\in\mathbb{C}^{M_{\textrm{hd}}\times K_{\textrm{hd}}}$
, $\mathbf{U}_{\textrm{f,s}}\in\mathbb{C}^{M_{\textrm{f}}\times K_{\textrm{f}}}$,
and $\mathbf{U}_{\textrm{t,s}}\in\mathbb{C}^{M_{\textrm{t}}\times K}$
can be obtained in the same way, where $K_{\textrm{hd}}=\min(K,M_{\textrm{hd}})$
and $K_{\textrm{f}}=\min(K,M_{\textrm{f}})$ .

By removing the noise subspace in each mode of $\mathcal{Y}$, we
construct a low-rank truncated HOSVD model of the noise-free measurement
tensor $\mathcal{Y}_{\textrm{s}}$ \cite{TensorDecompositions1},
as given by
\begin{equation}
\mathcal{Y}_{\textrm{s}}=\mathcal{L}_{\textrm{s}}\times_{1}\mathbf{U}_{\textrm{v,s}}\times_{2}\mathbf{U}_{\textrm{h,s}}\times_{3}\mathbf{U}_{\textrm{f,s}}\times_{4}\mathbf{U}_{\textrm{t,s}}\in\mathbb{C}^{M_{\textrm{vd}}\times M_{\textrm{hd}}\times M{}_{\textrm{f}}\times M_{\textrm{t}}},\label{eq:truncatedHOSVD}
\end{equation}
where $\mathcal{L}_{\textrm{s}}\in\mathbb{C}^{K_{\textrm{vd}}\times K_{\textrm{hd}}\times K_{\textrm{f}}\times K}$
is obtained by discarding the insignificant singular values of the
mode-$n$ unfoldings of $\mathcal{Y}$.

\subsection{Joint Angle-Delay Estimation }

We propose a tensor-based joint delay-angle estimation algorithm by
exploiting the shift-invariance relations between the elements in
each mode of the measurement tensor. By comparing \eqref{eq:modeltensorfinal}
with \eqref{eq:cp}, we first obtain
\begin{equation}
\widetilde{\mathcal{A}}=\mathcal{Z}_{\textrm{s}}\times_{1}\mathbf{\widetilde{A}}_{\textrm{vhb}}\times_{2}\mathbf{\widetilde{A}}_{\textrm{hhb}}\times_{3}\mathbf{A}_{\textrm{f}}.\label{eq:A}
\end{equation}
According to the truncated HOSVD model \eqref{eq:truncatedHOSVD},
we define the signal subspace tensor:
\begin{equation}
\mathcal{U}_{\textrm{s}}=\mathcal{L}_{\textrm{s}}\times_{1}\mathbf{U}_{\textrm{v,s}}\times_{2}\mathbf{U}_{\textrm{h,s}}\times_{3}\mathbf{U}_{\textrm{f,s}}\in\mathbb{C}^{M_{\textrm{vd}}\times M_{\textrm{hd}}\times M{}_{\textrm{f}}\times K}.\label{eq:Us}
\end{equation}
By comparing \eqref{eq:modeltensorfinal}, \eqref{eq:truncatedHOSVD},
\eqref{eq:A} and \eqref{eq:Us}, we have $\mathcal{U}_{\textrm{s}}\times_{4}\mathbf{U}_{\textrm{t,s}}=\widetilde{\mathcal{A}}\times_{4}\mathbf{S}.$
Because $\mathbf{U}_{\textrm{t,s}}\in\mathbb{C}^{M_{\textrm{t}}\times K}$
and $\mathbf{S}\in\mathbb{C}^{M_{\textrm{t}}\times K}$ are full column
rank matrices, we obtain
\begin{equation}
\widetilde{\mathcal{A}}=\mathcal{U}_{\textrm{s}}\times_{4}\mathbf{D},\label{eq:UsA}
\end{equation}
where $\mathbf{D}\in\mathbb{C}^{K\times K}$ is a full rank matrix.
Based on \eqref{eq:UsA}, we generalize the matrix-based subspace
algorithm to the tensor, and estimate the delay and angles of each
signal path.

\subsubsection{Estimation of Elevation Angle}

We first propose a tensor-based total-least-squares ESPRIT (TLS-ESPRIT)
algorithm to estimate the elevation angle and delay. To estimate the
elevation angle of each signal path, we first reveal and then exploit
the shift-invariance relations underlying the vertical array steering
matrix $\mathbf{\widetilde{A}}_{\textrm{v}}$, according to \eqref{eq:verticalarray}
and \eqref{eq:element}.

To select the elevation angle-related subtensors, we define two selection
matrices:
\[
\mathbf{J}_{\textrm{v}1}=[\mathbf{I}_{M_{\textrm{vd}}-1},\mathbf{0}_{(M_{\textrm{vd}}-1)\times1}]\in\mathbb{R}^{(M_{\textrm{vd}}-1)\times M_{\textrm{vd}}};
\]
\begin{equation}
\mathbf{J}_{\textrm{v}2}=[\mathbf{0}_{(M_{\textrm{vd}}-1)\times1},\mathbf{I}_{M_{\textrm{vd}}-1}]\in\mathbb{R}^{(M_{\textrm{vd}}-1)\times M_{\textrm{vd}}},\label{eq:Jv1}
\end{equation}
which are two auxiliary matrices. We reveal the following shift-invariance
relation among the selected subtensors:
\begin{equation}
\widetilde{\mathcal{A}}\times_{1}\mathbf{J}_{\textrm{v}2}=\widetilde{\mathcal{A}}\times_{1}\mathbf{J}_{\textrm{v}1}\times_{4}\mathbf{\mathbf{\Theta}_{\textrm{v}}},\label{eq:AJv}
\end{equation}
where $\mathbf{\mathbf{\Theta}_{\textrm{v}}}=\textrm{diag}\left(e^{-j\frac{2\pi}{c}f_{0}h\cos(\theta_{1})},\ldots,e^{-j\frac{2\pi}{c}f_{0}h\cos(\theta_{K})}\right)\in\mathbb{C}^{K\times K}$.
The shift-invariance relation is the key to our design of the following
tensor-based TLS-ESPRIT algorithm\footnote{The least-squares (LS) procedure can also be used for solving the
invariance equation \eqref{eq:Usv12}, but has slightly lower accuracy
than TLS. Section VI will provide the results of performance comparison
between the proposed algorithm (T-CTLS), which applies TLS-ESPRIT
for parameter estimation, with its variation (T-CLS), which uses LS-ESPRIT.}. The algorithm estimates the elevation angle of each signal in the
tensor form.

By substituting \eqref{eq:UsA} into \eqref{eq:AJv}, we have
\begin{equation}
\mathcal{U}_{\textrm{sv}2}\times_{4}\mathbf{D}=\mathcal{U}_{\textrm{sv}1}\times_{4}\left(\mathbf{\mathbf{\Theta}_{\textrm{v}}}\mathbf{D}\right),\label{eq:UsJv}
\end{equation}
where $\mathcal{U}_{\textrm{sv}1}=\mathcal{U}_{\textrm{s}}\times_{1}\mathbf{J}_{\textrm{v}1}\in\mathbb{C}^{(M_{\textrm{vd}}-1)\times M_{\textrm{hd}}\times M{}_{\textrm{f}}\times K}$
and $\mathcal{U}_{\textrm{sv}2}=\mathcal{U}_{\textrm{s}}\times_{1}\mathbf{J}_{\textrm{v}2}\in\mathbb{C}^{(M_{\textrm{vd}}-1)\times M_{\textrm{hd}}\times M{}_{\textrm{f}}\times K}$
are the selected subtensors of the signal subspace tensor $\mathcal{U}_{\textrm{s}}$.
Since $\mathbf{D}$ is a full rank matrix, we can left-multiply its
inverse to both sides of \eqref{eq:UsJv} and obtain
\begin{equation}
\mathcal{U}_{\textrm{sv}2}=\mathcal{U}_{\textrm{sv}1}\times_{4}\mathbf{\Psi}_{\textrm{v}},\label{eq:Usv12}
\end{equation}
where $\mathbf{\Psi}_{\textrm{v}}=\mathbf{D}^{-1}\mathbf{\mathbf{\Theta}_{\textrm{v}}}\mathbf{D}\in\mathbb{C}^{K\times K}$.

To obtain the estimate of $\mathbf{\Psi}_{\textrm{v}}$ in \eqref{eq:Usv12},
we define $\mathbf{\Upsilon}_{\textrm{v}}=\left[\mathbf{\Upsilon}_{\textrm{v}1}\quad\mathbf{\Upsilon}_{\textrm{v2}}\right]\in\mathbb{C}^{K\times2K}$.
According to the standard TLS \cite{Arraysignalprocessingconceptsandtechniques},
the estimate of $\mathbf{\Psi}_{\textrm{v}}$ is $\hat{\mathbf{\Psi}}_{\textrm{v}}=-\hat{\mathbf{\Upsilon}}_{\textrm{v}1}\hat{\mathbf{\Upsilon}}_{\textrm{v}2}^{-1},$
where the $K$ eigenvalues of $\hat{\mathbf{\Psi}}_{\textrm{v}}$,
i.e., $\lambda_{\textrm{v,}k}$, $k=1,2,\ldots,K$, are sorted in
descending order. We now generalize the matrix TLS problem formulation
\cite{Arraysignalprocessingconceptsandtechniques} to the tensor case,
as given by:
\begin{align}
\hat{\mathbf{\Upsilon}}_{\textrm{v}} & =\arg\min_{\mathbf{\Upsilon}_{\textrm{v}}}\left\Vert \mathcal{U}_{\textrm{sv}1}\times_{4}\mathbf{\Upsilon}_{\textrm{v}1}+\mathcal{U}_{\textrm{sv}2}\times_{4}\mathbf{\Upsilon}_{\textrm{v2}}\right\Vert ,\nonumber \\
 & \quad\textrm{s.t.}\quad\mathbf{\Upsilon}_{\textrm{v}}\mathbf{\Upsilon}_{\textrm{v}}^{H}=\mathbf{I}_{K},\label{eq:tls}
\end{align}
which finds a unitary matrix $\mathbf{\Upsilon}_{\textrm{v}}$ whose
submatrices are orthogonal to $\mathcal{U}_{\textrm{sv}1}$ and $\mathcal{U}_{\textrm{sv}2}$
in mode-4.

According to \eqref{eq:propertymultiproduct}, the mode-4 unfoldings
of $\mathcal{U}_{\textrm{sv}1}$ is given by
\begin{equation}
\mathbf{U}_{\textrm{sv}1}{}_{(4)}=\mathbf{U}_{\textrm{s}}{}_{(4)}\left(\mathbf{J}_{\textrm{v}1}\otimes\mathbf{I}_{M_{\textrm{hd}}}\otimes\mathbf{I}_{M_{\textrm{f}}}\right)^{T},
\end{equation}
where $\mathbf{U}_{\textrm{s}}{}_{(4)}\in\mathbb{C}^{K\times M_{\textrm{vd}}M_{\textrm{hd}}M{}_{\textrm{f}}}$
is the mode-4 unfolding of $\mathcal{U}_{\textrm{s}}$. The mode-4
unfoldings of $\mathcal{U}_{\textrm{sv}1}$ can be formulated in the
same way. Since $\left\Vert \mathcal{A}\right\Vert =\left\Vert \mathbf{A}_{(n)}\right\Vert _{\textrm{F}}$
$(n=1,2,\ldots,N)$ \cite{TensorDecompositions1}, we rewrite the
tensor TLS problem \eqref{eq:tls} in a matrix format as:
\begin{align}
\hat{\mathbf{\Upsilon}}_{\textrm{v}} & =\arg\min_{\mathbf{\Upsilon}_{\textrm{v}}}\left\Vert \mathbf{\Upsilon}_{\textrm{v}1}\mathbf{U}_{\textrm{s}}{}_{(4)}\left(\mathbf{J}_{\textrm{v}1}\otimes\mathbf{I}_{M_{\textrm{hd}}}\otimes\mathbf{I}_{M_{\textrm{f}}}\right)^{T}\right.\nonumber \\
 & \left.+\mathbf{\Upsilon}_{\textrm{v2}}\mathbf{U}_{\textrm{s}}{}_{(4)}\left(\mathbf{J}_{\textrm{v}2}\otimes\mathbf{I}_{M_{\textrm{hd}}}\otimes\mathbf{I}_{M_{\textrm{f}}}\right)^{T}\right\Vert _{\textrm{F}}\nonumber \\
 & =\arg\min_{\mathbf{\Upsilon}_{\textrm{v}}}\left\Vert \mathbf{W}_{\textrm{v}}\mathbf{\Upsilon}_{\textrm{v}}^{T}\right\Vert _{\textrm{F}},
\end{align}
where
\begin{align}
\mathbf{W}_{\textrm{v}} & =\left[\left(\mathbf{J}_{\textrm{v}1}\otimes\mathbf{I}_{M_{\textrm{hd}}}\otimes\mathbf{I}_{M_{\textrm{f}}}\right)\mathbf{U}_{\textrm{s}}{}_{(4)}^{T}\quad\left(\mathbf{J}_{\textrm{v}2}\otimes\mathbf{I}_{M_{\textrm{hd}}}\otimes\mathbf{I}_{M_{\textrm{f}}}\right)\mathbf{U}_{\textrm{s}}{}_{(4)}^{T}\right]\nonumber \\
 & \in\mathbb{C}^{(M_{\textrm{vd}}-1)M_{\textrm{hd}}M{}_{\textrm{f}}\times2K}.
\end{align}
The SVD of $\mathbf{W}_{\textrm{v}}^{H}\mathbf{W}_{\textrm{v}}$ is
written as $\mathbf{W}_{\textrm{v}}^{H}\mathbf{W}_{\textrm{v}}=\mathbf{\dot{U}}_{\textrm{v}}\mathbf{\dot{\Lambda}}_{\textrm{v}}\mathbf{\dot{V}}_{\textrm{v}},$
where $\mathbf{\dot{U}}_{\textrm{v}}\in\mathbb{C}^{2K\times2K}$ and
$\mathbf{\dot{V}}_{\textrm{v}}\in\mathbb{C}^{2K\times2K}$ are the
left and right singular matrices, respectively; and $\mathbf{\dot{\Lambda}}_{\textrm{v}}\in\mathbb{C}^{2K\times2K}$
contains singular values. We partition $\mathbf{\dot{U}}_{\textrm{v}}$
into four blocks:
\begin{equation}
\mathbf{\dot{U}}_{\textrm{v}}=\left[\begin{array}{cc}
\mathbf{\dot{U}}_{\textrm{v11}} & \mathbf{\dot{U}}_{\textrm{v12}}\\
\mathbf{\dot{U}}_{\textrm{v21}} & \mathbf{\dot{U}}_{\textrm{v22}}
\end{array}\right]\in\mathbb{C}^{2K\times2K}.\label{eq:TLSfinal}
\end{equation}
Let $\hat{\mathbf{\Upsilon}}_{\textrm{v}1}=\mathbf{\dot{U}}_{\textrm{v12}}^{T}\in\mathbb{C}^{K\times K}$
and $\mathbf{\hat{\Upsilon}}_{\textrm{v}2}=\mathbf{\dot{U}}_{\textrm{v22}}^{T}\in\mathbb{C}^{K\times K}$.

According to the array steering expression in \eqref{eq:verticalarray},
the elevation angle of the $k$-th path can be finally estimated as
\begin{equation}
\hat{\theta}_{k}=\arccos\left(\frac{jc\ln(\lambda_{\textrm{v,}k})}{2\pi f_{0}h}\right).\label{theta}
\end{equation}

\subsubsection{Estimation of Delay}

We can estimate the delays by exploiting the shift-invariance relation
between the delay-related subtensors. We express the delay-dependent
shift-invariance relation, as follows.
\begin{equation}
\widetilde{\mathcal{A}}\times_{3}\mathbf{J}_{\textrm{f}2}=\widetilde{\mathcal{A}}\times_{3}\mathbf{J}_{\textrm{f}1}\times_{4}\mathbf{\mathbf{\Theta}_{\textrm{f}}},
\end{equation}
where $\mathbf{\mathbf{\Theta}_{\textrm{f}}}=\textrm{diag}\left(e^{-j2\pi\Delta_{\textrm{F}}\tau_{1}},\ldots,e^{-j2\pi\Delta_{\textrm{F}}\tau_{K}}\right)$
with $\Delta_{\textrm{F}}$ being the subcarrier spacing. $\mathbf{J}_{\textrm{f}1}$
and $\mathbf{J}_{\textrm{f}2}$ are two selection matrices to select
the delay-related subtensors. $\mathbf{J}_{\textrm{f}1}$ and $\mathbf{J}_{\textrm{f}2}$
can be constructed in the same way as in \eqref{eq:Jv1}. By using
TLS-ESPRIT \eqref{eq:tls}, the delay of the $k$-th path, $\tau_{k}$,
can be estimated as
\begin{equation}
\hat{\tau}_{k}=\frac{j\ln(\lambda_{\textrm{f,}k})}{2\pi\Delta_{\textrm{F}}},\label{delay}
\end{equation}
where $\lambda_{\textrm{f,}k}$ is an eigenvalue of the delay-related
matrix $\mathbf{\Psi}_{\textrm{f}}=\mathbf{D}\mathbf{\mathbf{\Theta}_{\textrm{f}}}\mathbf{D}^{-1}$.
In the presence of non-negligible noises, the estimates of the elevation
angle and delay of each source may be paired incorrectly. After obtaining
the estimates of $\hat{\mathbf{\Psi}}_{\textrm{v}}$ and $\hat{\mathbf{\Psi}}_{\textrm{f}}$
with \eqref{eq:tls}, joint SVD methods \cite{jointSVD} can be used
to obtain the joint eigenvalues of $\hat{\mathbf{\Psi}}_{\textrm{v}}$
and $\hat{\mathbf{\Psi}}_{\textrm{f}}$, and then the correctly matched
pairs of estimated parameters can be obtained.

\subsubsection{Estimation of Azimuth Angle}

We design the tensor-MUSIC algorithm \cite{tensormmWave} to estimate
the azimuth angle of each path. From \eqref{eq:Ahhb}, there are nonlinear
Bessel functions in the expression for the horizontal array steering
matrix $\mathbf{\widetilde{A}}_{\textrm{h}}$, and therefore there
is no shift-invariance relation for the azimuth angle estimation,
as opposed to \eqref{eq:AJv}.

According to \eqref{eq:truncatedHOSVD}, we discard the largest $K$
singular values of the mode-$n$ unfoldings of the measurement tensor
$\mathcal{Y}$, i.e., setting the corresponding parts of $\mathcal{L}$
to zero. Then we obtain the noise subspace tensor as\footnote{It is well known that this solution for estimating the noise subspace
is not optimal in the least squares sense. However, it is a good approximation
in most cases \cite{TensorDecompositions2,tensorMUSIC} and it is
easy to implement.}:
\begin{equation}
\mathcal{Y}_{\textrm{n}}=\mathcal{L}_{\textrm{n}}\times_{1}\mathbf{U}_{\textrm{v,n}}\times_{2}\mathbf{U}_{\textrm{h,n}}\times_{3}\mathbf{U}_{\textrm{f,n}}\times_{4}\mathbf{U}_{\textrm{t,n}},\label{eq:noisesubspace}
\end{equation}
where $\mathbf{U}_{\textrm{v,n}}\in\mathbb{C}^{M_{\textrm{vd}}\times(M_{\textrm{vd}}-K)}$
is constructed by the last $(M_{\textrm{vd}}-K)$ columns of $\mathbf{U}_{\textrm{v}}$;
$\mathbf{U}_{\textrm{h,n}}\in\mathbb{C}^{M_{\textrm{hd}}\times(M_{\textrm{hd}}-K)}$
is the last $(M_{\textrm{hd}}-K)$ columns of $\mathbf{U}_{\textrm{h}}$;
$\mathbf{U}_{\textrm{f,n}}\in\mathbb{C}^{M_{\textrm{f}}\times(M_{\textrm{f}}-K)}$
is the last $(M_{\textrm{f}}-K)$ columns of $\mathbf{U}_{\textrm{f}}$;
and $\mathbf{U}_{\textrm{t,n}}\in\mathbb{C}^{M_{\textrm{t}}\times(M_{\textrm{t}}-K)}$
is the last $(M_{\textrm{t}}-K)$ columns of $\mathbf{U}_{\textrm{t}}$.
The core $\mathcal{L}_{\textrm{n}}$ can be evaluated by
\begin{equation}
\mathcal{L}_{\textrm{n}}=\mathcal{Y}_{\textrm{n}}\times_{1}\mathbf{U}_{\textrm{v,n}}^{H}\times_{2}\mathbf{U}_{\textrm{h,n}}^{H}\times_{3}\mathbf{U}_{\textrm{f,n}}^{H}\times_{4}\mathbf{U}_{\textrm{t,n}}^{H}.
\end{equation}
Based on the subspace estimation of $\mathcal{Y}$ \eqref{eq:Us},
we generalize the matrix-based MUSIC, and the tensor MUSIC spectrum
of the azimuth angle is defined as
\begin{equation}
\textrm{SP}_{\textrm{MUSIC}}(\mathbf{\Phi})=\left\Vert \widetilde{\mathcal{A}}\times_{2}\mathbf{U}_{\textrm{h,n}}^{H}\right\Vert ^{-2},\label{eq:music}
\end{equation}
where $\mathbf{\Phi}=\left[\phi_{1},\phi_{2},\ldots,\phi_{K}\right]$.

According to \eqref{eq:propertymultiproduct}, the mode-2 matricization
of $\widetilde{\mathcal{A}}$ in \eqref{eq:A} can be expressed as
\begin{equation}
\mathbf{\widetilde{A}}_{\textrm{(2)}}=\mathbf{\widetilde{A}}_{\textrm{hhb}}Z_{\textrm{s}(2)}\left(\mathbf{A}_{\textrm{f}}\otimes\mathbf{I}_{M_{\textrm{t}}}\otimes\mathbf{\widetilde{A}}_{\textrm{vhb}}\right)^{T}.\label{eq:A2}
\end{equation}

We substitute \eqref{eq:A2} into \eqref{eq:music} and obtain the
mode-2 matricization of \eqref{eq:music}, as given by
\begin{equation}
\textrm{SP}_{\textrm{MUSIC}}(\mathbf{\Phi})=\left\Vert \mathbf{U}_{\textrm{h,n}}^{H}\mathbf{\widetilde{A}}_{\textrm{hhb}}Z_{\textrm{s}(2)}\left(\mathbf{A}_{\textrm{f}}\otimes\mathbf{I}_{M_{\textrm{t}}}\otimes\mathbf{\widetilde{A}}_{\textrm{vhb}}\right)^{T}\right\Vert _{\textrm{F}}^{-2}.\label{eq:MUSIC2}
\end{equation}
By substituting the estimated elevation angle of each path, i.e.,
\eqref{theta}, into \eqref{eq:MUSIC2}, the corresponding azimuth
angle $\phi_{k}$ can be estimated by searching the prominent peaks
of the tensor MUSIC spectrum \eqref{eq:MUSIC2}.

\begin{remark} When applying the tensor-based TLS-ESPRIT and MUSIC
algorithms to estimate the parameters, we first apply the HOSVD evaluates
the SVD of the unfoldings of $\mathcal{Y}$ in all modes, and then
suppress the noise components by discarding the singular vectors and
slices of the core tensor that correspond to insignificant singular
values of the matricized tensor in each mode. The uniqueness and identifiability
of the proposed algorithm inherits from that of the matrix-based counterpart
of the algorithm, due to the fact that the proposed algorithm can
be regarded as the high-dimensional generalization of the matrix-based
counterpart \cite{TensorDecompositions1}. In particular, to achieve
the unique parameter estimates of the $K$ sources would need to construct
the signal subspace tensor $\mathcal{U}_{\textrm{s}}$ with a smaller
number of sources $K$ than time frames $M_{\textrm{t}}$. Our method
is suitable for multi-dimensional parameter estimation problems in
mmWave systems, where $K\ll\min(M_{\textrm{vd}},M_{\textrm{hd}},M{}_{\textrm{f}},M_{\textrm{t}})$
due to the sparsity of mmWave\footnote{In rich multipath environments, i.e., $K\geq\max(M_{\textrm{vd}},M_{\textrm{hd}},M{}_{\textrm{f}})$,
no singular values and core slices of the mode-$n$ unfoldings can
be discarded, because all these belong to the signal subspace. Thus,
in this case, the tensor-based subspace estimation is equivalent to
the matrix-based counterpart \cite{tensor2HoSVD}.}.

When applying the matrix-based alternative, the noise is only suppressed
in one of the dimensions (or modes) of the measurement tensor, hence
degrading the estimation accuracy. This is because the noise is multi-dimensional
with the same dimensions as the received signal. It is important to
take all dimensions of the received signal into consideration, and
suppress the noises in all the dimensions. Thus, the use of tensors
can better suppress the noises than matrices, hence improving the
estimation accuracy of the elevation and azimuth angles and the delay,
i.e., $\hat{\theta}_{k}$, $\hat{\phi}_{k}$, and $\hat{\tau}_{k}$.
\end{remark}

\subsection{Tensor-based Spatial Smoothing for UCyA}

The parameter estimation presented in Sections V-A and V-B is actually
the last step in Fig. 1. In this subsection, we propose the necessary
optional second-to-last step. The decomposition of the signal and
noise subspaces in \eqref{eq:subspace} is under the assumption that
all the received signals are incoherent, as typically required in
the subpace-based parameter estimation algorithms, such as MUSIC \cite{34}
and ESPRIT \cite{we2}. The rank of the signal subspace is assumed
to be the number of received signals $K$. In practice, coherent signals
are often received. The rank of the signal subspace decreases, leading
to incorrect decomposition of the subspaces. An effective method to
restore the rank is a spatial smoothing technique \cite{Arraysignalprocessingconceptsandtechniques}
which divides an antenna array into several subarrays and exploits
the inherent linear recurrence relations (i.e., shift invariances)
among the subarrays to decorrelate the coherent signals. Unfortunately,
the spatial smoothing technique is only applicable to systems with
uniformly and linearly spaced antenna elements \cite{Arraysignalprocessingconceptsandtechniques}.
\begin{figure*}
\begin{centering}
\includegraphics[width=0.75\textwidth]{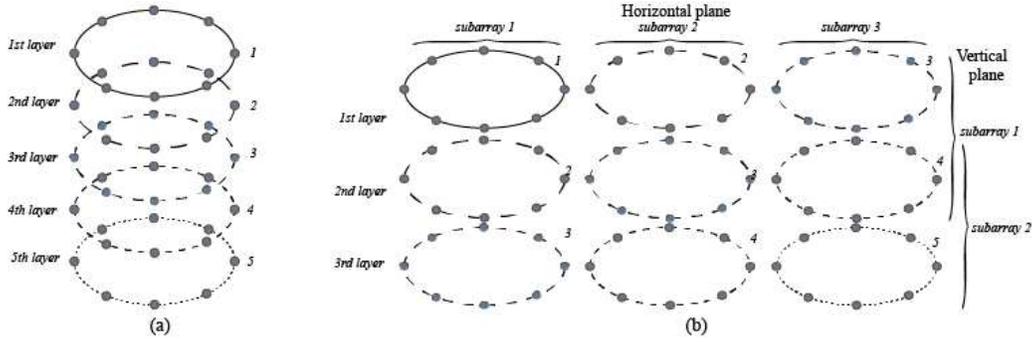}
\par\end{centering}
\caption{An illustration of the proposed spatial smoothing for a five-layer
UCyA, where we need to construct three ``subarrays'' on the horizontal
plane, the second and third UCAs are seen as the translations of the
first UCA at the same layer. After the spatial smoothing, the original
first, second, and third UCAs are at the first layer of the ``new''
UCyA, the second layer accommodates the original second, third and
fourth UCAs, and the third layer of the ``new'' UCyA accommodates
the original third, fourth and fifth UCAs. \label{fig:Spatial-smoothing-for}}
\end{figure*}

We extend the spatial smoothing technique to our hybrid UCyA to decorrelate
coherent signals. This is not trivial, as the array manifolds of the
UCyA in the horizontal space domain (i.e., the second mode of $\mathcal{Y}$)
are UCAs, not linear arrays. It is difficult to split subarrays and
obtain the required recurrence relations, as existing spatial smoothing
techniques would require. We propose to utilize the recurrence relations
between the UCAs at different layers of the UCyA to create the required
recurrence-relation subarrays in the horizontal space domain. In other
words, we regard each UCA as a subarray, and use these vertically
arranged and coaxially aligned subarrays to construct the ``virtual''
subarrays in the horizontal space domain. The $n_{\textrm{h}}$-th
subarray in the horizontal space domain can be constructed as
\begin{equation}
\mathcal{Y}_{\textrm{ss}}^{(n_{\textrm{h}})}=\mathcal{Y}\times_{1}\mathbf{J}_{\textrm{ssh,}n_{\textrm{h}}},
\end{equation}
where $\mathbf{J}_{\textrm{ssh,}n_{\textrm{h}}}=[\mathbf{0}_{M_{\textrm{hd}}\times(n_{\textrm{h}}-1)},\mathbf{I}_{M_{\textrm{hd}}},\mathbf{0}_{M_{\textrm{hd}}\times(N_{\textrm{h}}-n_{\textrm{h}})}].$

Then, we can generate the required linear recurrence relation between
two adjacent subarrays: $\mathcal{Y}_{\textrm{ss}}^{(n_{\textrm{h}}+1)}=\mathcal{Y}_{\textrm{ss}}^{(n_{\textrm{h}})}\times_{4}\mathbf{\Theta}_{\textrm{h}}$,
where $\mathbf{\Theta}_{\textrm{h}}=\textrm{diag}\left(e^{-j\frac{2\pi}{c}f_{0}h\cos(\theta_{1})},\ldots,e^{-j\frac{2\pi}{c}f_{0}h\cos(\theta_{K})}\right)$.
The numbers of subarrays and elements per subarray are determined
in the following theorem:

\begin{theorem} If both the numbers of subarrays and elements per
subarray are larger than the number of signals, i.e., $N_{\textrm{h}}\geq K$
and $M_{\textrm{hd}}\geq K$, the rank of the signal subspace in the
mode-2 of the concatenated tensor $\mathcal{Y}_{\textrm{ssh}}=\left[\underset{{\scriptstyle n_{\textrm{h}}=1,\ldots,N_{\textrm{h}}}}{\sqcup_{4}}\mathcal{Y}_{\textrm{ss}}^{(n_{\textrm{h}})}\right]$
is $K$. \end{theorem}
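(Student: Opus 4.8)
\emph{Proof idea.} The plan is to show that, once the beam‑squinting preprocessing of Section IV has made the array steering vectors frequency‑independent, the noiseless part of the concatenated tensor $\mathcal{Y}_{\textrm{ssh}}$ is again a $K$‑term CP tensor whose mode‑$2$ factor matrix is $\widetilde{\mathbf{A}}_{\textrm{hhb}}$, and then to argue that the rank of its mode‑$2$ unfolding equals $K$. This reduces to two full‑column‑rank statements: one for $\widetilde{\mathbf{A}}_{\textrm{hhb}}$, which follows from $M_{\textrm{hd}}\geq K$, and one for the ``symbol'' factor created by the smoothing, which follows from $N_{\textrm{h}}\geq K$. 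The second statement is exactly where spatial smoothing undoes the rank loss caused by coherent paths.

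\emph{CP form after smoothing and reduction.} Restricting the CP expansion \eqref{eq:tensorrank} of $\mathcal{Y}$ to its signal part and applying $\mathcal{Y}_{\textrm{ss}}^{(n_{\textrm{h}})}=\mathcal{Y}\times_{1}\mathbf{J}_{\textrm{ssh},n_{\textrm{h}}}$, each subtensor is $\sum_{k=1}^{K}\big(\mathbf{J}_{\textrm{ssh},n_{\textrm{h}}}\widetilde{\mathbf{a}}_{\textrm{vhb}}(\theta_{k})\big)\circ\widetilde{\mathbf{a}}_{\textrm{hhb}}(\theta_{k},\phi_{k})\circ\mathbf{a}_{\textrm{f}}(\tau_{k})\circ\mathbf{s}_{k}$. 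Since, after preprocessing, the vertical steering vector \eqref{eq:verticalarray} is Vandermonde at the reference frequency $f_{0}$, shifting the selected window of $M_{\textrm{hd}}$ consecutive layers by one layer scales the $k$‑th rank‑one term by $[\mathbf{\Theta}_{\textrm{h}}]_{k,k}$, which is precisely the relation $\mathcal{Y}_{\textrm{ss}}^{(n_{\textrm{h}}+1)}=\mathcal{Y}_{\textrm{ss}}^{(n_{\textrm{h}})}\times_{4}\mathbf{\Theta}_{\textrm{h}}$. Hence concatenation along mode‑$4$ yields, for the signal part, a $K$‑term CP tensor with factor matrices $\mathbf{A}_{\textrm{vsub}}\in\mathbb{C}^{M_{\textrm{hd}}\times K}$ (the leading $M_{\textrm{hd}}$ rows of $\widetilde{\mathbf{A}}_{\textrm{vhb}}$), $\widetilde{\mathbf{A}}_{\textrm{hhb}}$, $\mathbf{A}_{\textrm{f}}$, and the stacked symbol matrix $\mathbf{S}_{\textrm{ss}}=\big[\mathbf{S}^{T},\ (\mathbf{S}\mathbf{\Theta}_{\textrm{h}})^{T},\ \ldots,\ (\mathbf{S}\mathbf{\Theta}_{\textrm{h}}^{N_{\textrm{h}}-1})^{T}\big]^{T}\in\mathbb{C}^{N_{\textrm{h}}M_{\textrm{t}}\times K}$. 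Analogously to \eqref{eq:A2}, its mode‑$2$ unfolding is $\widetilde{\mathbf{A}}_{\textrm{hhb}}\big(\mathbf{S}_{\textrm{ss}}\diamond\mathbf{A}_{\textrm{f}}\diamond\mathbf{A}_{\textrm{vsub}}\big)^{T}$, so the mode‑$2$ signal subspace is its column space and has dimension equal to the rank of this product. By \eqref{eq:Ahhb} the columns of $\widetilde{\mathbf{A}}_{\textrm{hhb}}$ are Bessel‑weighted DFT vectors at distinct azimuths, hence linearly independent once $M_{\textrm{hd}}\geq K$; and since a Khatri--Rao product of matrices with no zero columns has column rank no smaller than that of any factor, the dimension in question equals $K$ as soon as $\mathrm{rank}(\mathbf{S}_{\textrm{ss}})=K$.

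\emph{Rank restoration.} It remains to prove $\mathrm{rank}(\mathbf{S}_{\textrm{ss}})=K$, and this is where $N_{\textrm{h}}\geq K$ is used. If $\mathbf{x}\in\ker(\mathbf{S}_{\textrm{ss}})$ then $\mathbf{S}\,\mathbf{\Theta}_{\textrm{h}}^{\,n}\mathbf{x}=\mathbf{0}$ for $n=0,\ldots,N_{\textrm{h}}-1$, i.e.\ $\mathrm{span}\{\mathbf{\Theta}_{\textrm{h}}^{\,n}\mathbf{x}\}_{n=0}^{N_{\textrm{h}}-1}\subseteq\ker(\mathbf{S})$. Writing $s:=|\mathrm{supp}(\mathbf{x})|$, the matrix $[\mathbf{x}\ \mathbf{\Theta}_{\textrm{h}}\mathbf{x}\ \cdots\ \mathbf{\Theta}_{\textrm{h}}^{N_{\textrm{h}}-1}\mathbf{x}]$ restricted to the $s$ rows in $\mathrm{supp}(\mathbf{x})$ is $\mathrm{diag}(\mathbf{x})$ times an $s\times N_{\textrm{h}}$ Vandermonde matrix whose nodes $\{e^{-j2\pi f_{0}h\cos\theta_{k}/c}\}_{k\in\mathrm{supp}(\mathbf{x})}$ are distinct (distinct elevation angles and the usual non‑aliasing condition on $h$), and it is zero on the remaining rows; since $N_{\textrm{h}}\geq K\geq s$, that span is exactly the coordinate subspace $\mathrm{span}\{\mathbf{e}_{k}:k\in\mathrm{supp}(\mathbf{x})\}$. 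Consequently $\mathbf{e}_{k}\in\ker(\mathbf{S})$ for every such $k$, so the $k$‑th column $\mathbf{s}_{k}$ of $\mathbf{S}$ would vanish, which is impossible because each source has nonzero power. Hence $\mathrm{supp}(\mathbf{x})=\emptyset$, so $\ker(\mathbf{S}_{\textrm{ss}})=\{\mathbf{0}\}$ and $\mathrm{rank}(\mathbf{S}_{\textrm{ss}})=K$, which together with the previous paragraph gives the claim.

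I expect this last step to be the main obstacle: it is the precise incarnation of the statement that forward spatial smoothing with $N_{\textrm{h}}\geq K$ subarrays decorrelates up to $K$ coherent paths, and it requires identifying the subarray shift $\mathbf{\Theta}_{\textrm{h}}$, verifying that its diagonal entries are distinct, and invoking the Vandermonde/Krylov fact above. The remaining ingredients — the mode‑$2$ unfolding identity, the linear independence of the columns of $\widetilde{\mathbf{A}}_{\textrm{hhb}}$, and the Khatri--Rao rank bound — are routine once the non‑degeneracy hypotheses (distinct $(\theta_{k},\phi_{k})$, nonzero path powers, non‑aliasing vertical spacing) are stated explicitly.
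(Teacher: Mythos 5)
Your proof is correct in substance, but note that the paper does not actually supply a proof of this theorem: it states only that ``the proof can be developed in the same way as in \cite{tensor1HoSVD}'' and omits it, so there is no in-paper argument to compare against. What you have written is a self-contained reconstruction of the standard forward-spatial-smoothing rank-restoration argument, transplanted to the tensor setting, and the three ingredients you isolate are the right ones: (i) the window-shift identity $\mathbf{J}_{\textrm{ssh},n_{\textrm{h}}+1}\widetilde{\mathbf{a}}_{\textrm{vhb}}(\theta_{k})=[\mathbf{\Theta}_{\textrm{h}}]_{k,k}\,\mathbf{J}_{\textrm{ssh},n_{\textrm{h}}}\widetilde{\mathbf{a}}_{\textrm{vhb}}(\theta_{k})$, which holds because $\mathbf{B}_{\textrm{vab}}=\mathbf{I}_{M_{\textrm{v}}}$ and the focusing step leave the vertical manifold Vandermonde at $f_{0}$, and which is exactly the relation $\mathcal{Y}_{\textrm{ss}}^{(n_{\textrm{h}}+1)}=\mathcal{Y}_{\textrm{ss}}^{(n_{\textrm{h}})}\times_{4}\mathbf{\Theta}_{\textrm{h}}$ asserted in the text; (ii) the Khatri--Rao full-column-rank lemma for the mode-2 unfolding; and (iii) the Krylov/Vandermonde kernel argument giving $\mathrm{rank}(\mathbf{S}_{\textrm{ss}})=K$ when $N_{\textrm{h}}\geq K$, which is the actual decorrelation mechanism and which you prove soundly. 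Two hypotheses you flag deserve emphasis because the paper leaves them implicit: the diagonal entries of $\mathbf{\Theta}_{\textrm{h}}$ must be pairwise distinct, i.e., the coherent sources must have distinct elevation angles with $h$ small enough to avoid aliasing of $\cos\theta_{k}$ --- since the ``horizontal'' virtual subarrays are generated purely by vertical shifts, this smoothing cannot decorrelate two coherent paths that differ only in azimuth; and the full column rank of $\widetilde{\mathbf{A}}_{\textrm{hhb}}$ for $M_{\textrm{hd}}\geq K$ is a genericity claim about Bessel-weighted exponential vectors rather than an automatic consequence of $M_{\textrm{hd}}\geq K$. With those non-degeneracy assumptions stated explicitly, your argument is complete and, as far as the omitted proof can be inferred from the cited reference, follows the same route it would take.
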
 \begin{proof} The proof can be developed
in the same way as in \cite{tensor1HoSVD}, and hence omitted. \end{proof}

According to Theorem 2, we need to construct subarrays in all domains
for the correct decomposition of the subspaces, and apply the HOSVD
in all modes of $\mathcal{Y}$. Because some of the vertically arranged
UCAs are used to construct the ``virtual'' subarrays in the horizontal
space domain, the number of subarrays in the vertical space domain
decreases. Take the five-layer UCyA in Fig. \ref{fig:Spatial-smoothing-for}(a)
for an example. The original five-layer UCyA shown in Fig. \ref{fig:Spatial-smoothing-for}(a)
becomes a three-layer virtual array, which constructs the subarrays
in the vertical space domain, as shown in Fig. \ref{fig:Spatial-smoothing-for}(b).

We propose to meticulously arrange the virtual subarrays. $N_{\textrm{v}}$
subarrays are constructed in the vertical space with $\tilde{M}_{\textrm{v}}$
elements per subarray, and $N_{\textrm{h}}$ subarrays are constructed
in the horizontal space with $M_{\textrm{hd}}$ elements per subarray.
Because there are linear recurrence relations among subcarrier frequencies,
the standard spatial smoothing technique can be used in the frequency
domain (i.e., the mode-3 of $\mathcal{Y}$). We decouple the mode-3
of $\mathcal{Y}$ into $N_{\textrm{f}}$ subarrays with $\tilde{M}_{\textrm{f}}$
elements each. As a result, the spatially smoothed tensor is given
by
\begin{align}
\mathcal{Y}_{\textrm{ss}} & =\left[\underset{{\scriptstyle n_{\textrm{v}}=1,\ldots,N_{\textrm{v}}}}{\sqcup_{4}}\underset{{\scriptstyle n_{\textrm{h}}=1,\ldots,N_{\textrm{h}}}}{\sqcup_{4}}\underset{{\scriptstyle n_{\textrm{f}}=1,\ldots,N_{\textrm{f}}}}{\sqcup_{4}}\mathcal{Y}_{\textrm{ss}}^{(n_{\textrm{v}},n_{\textrm{h}},n_{\textrm{f}})}\right]\nonumber \\
 & \in\mathbb{C}^{\tilde{M}_{\textrm{v}}\times M_{\textrm{hd}}\times\tilde{M}_{\textrm{f}}\times\left(M_{\textrm{t}}N_{\textrm{v}}N_{\textrm{h}}N_{\textrm{f}}\right)},\label{eq:spatailSmoothing}
\end{align}
which is obtained by concatenating the subarrays in mode-4:
\begin{equation}
\mathcal{Y}_{\textrm{ss}}^{(n_{\textrm{v}},n_{\textrm{h}},n_{\textrm{f}})}=\mathcal{Y}\times_{1}\mathbf{J}_{\textrm{ssvh,}n_{\textrm{vh}}}\times_{3}\mathbf{J}_{\textrm{ssf},n_{\textrm{f}}},
\end{equation}
where $n_{\textrm{vh}}=n_{\textrm{v}}+n_{\textrm{h}}-1$. $\mathbf{J}_{\textrm{ssvh,}n_{\textrm{vh}}}$
and $\mathbf{J}_{\textrm{ssf,}n_{\textrm{f}}}$ are two subtensor
selection matrices, as given respectively by

\[
\mathbf{J}_{\textrm{ssvh,}n_{\textrm{vh}}}=[\mathbf{0}_{\tilde{M}_{\textrm{v}}\times(n_{\textrm{vh}}-1)},\mathbf{I}_{\tilde{M}_{\textrm{v}}},\mathbf{0}_{\tilde{M}_{\textrm{v}}\times(N_{\textrm{vd}}-n_{\textrm{vh}})}];
\]
\begin{equation}
\mathbf{J}_{\textrm{ssf,}n_{\textrm{f}}}=[\mathbf{0}_{\tilde{M}_{\textrm{f}}\times(n_{\textrm{f}}-1)},\mathbf{I}_{\tilde{M}_{\textrm{f}}},\mathbf{0}_{\tilde{M}_{\textrm{f}}\times(N_{\textrm{f}}-n_{\textrm{f}})}].
\end{equation}
The number of subarray elements in the mode-1 and mode-3 can be computed
by $\tilde{M}_{\textrm{v}}=M_{\textrm{vd}}-N_{\textrm{v}}-N_{\textrm{h}}+2$
and $\tilde{M}_{\textrm{f}}=M_{\textrm{f}}-N_{\textrm{f}}+1$, respectively.
To decorrelate coherent signals in each domain, we use $\mathcal{Y}_{\textrm{ss}}$
to replace $\mathcal{Y}$ in \eqref{eq:modeltensorfinal}. The parameter
estimation of coherent signals follows the rest of the steps recorded
in the earlier part of Section V, which is the last step in Fig. 1.

Note that the proposed smoothing method is needed to guarantee that
the rank used for parameter estimation is the actual rank. If we conduct
the HOSVD based on a smaller rank (due to coherent signals) than the
actual rank, the estimation performance of the azimuth and elevation
angles, and delays would degrade. This is because when the smaller
rank is used, signal components can be incorrectly decomposed into
the noise subspace, reducing the dimensions of the constructed truncated
HOSVD model of $\mathcal{Y}_{\textrm{s}}$ in all modes. As a result,
we would not be able to correctly estimate the azimuth and elevation
angles, and delays.

Also note that by using the proposed method, the antenna apertures
in the first and third modes are reduced, as the elements in the two
modes of the original measurement tensor $\mathcal{Y}$ are used to
construct a sufficient number of subarrays according to Theorem 3.
The loss of the antenna aperture in the first mode is nearly one third.
The antenna aperture in the second mode does not change, because the
subarrays in the mode are constructed by the the spatial shift of
UCAs at the other layers. Algorithm \ref{alg:algorithmESPRIT} summarizes
the procedure of the proposed tensor-based subspace estimation algorithm.
\begin{algorithm}
\protect\caption{Tensor-based subspace estimation algorithm\label{alg:algorithmESPRIT}}

\begin{itemize}
\item \textbf{Input}: The received signals, $\mathbf{x}_{m_{\textrm{f}},m_{\textrm{t}},m_{\textrm{b}}}$
$(m_{\textrm{b}}=1,\ldots,M_{\textrm{b}},\:m_{\textrm{t}}=1,\ldots,M_{\textrm{t}},m_{\textrm{f}}=1,\ldots,M_{\textrm{f}})$,
the number of sources, $K$, and geometrical parameters of the UCyA.
\item \textbf{Output}: The estimated delay, elevation and azimuth angles,
$\hat{\tau}_{k}$, $\hat{\theta}_{k}$, and $\hat{\phi}_{k}$, $k=1,2,\ldots,K$.
\item Design the analog and digital beamforming matrices, $\mathbf{B}_{\textrm{ab}}$
and $\mathbf{B}_{\textrm{db},m_{\textrm{f}},m_{\textrm{b}}}$, and
model the beamspace signals according to \eqref{eq:tensormodel}.
\item Calculate the focusing matrices, $\mathbf{T}_{\textrm{v},m_{\textrm{f}},m_{\textrm{b}}}$
and $\mathbf{T}_{\textrm{h},m_{\textrm{f}},m_{\textrm{b}}}$, by solving
\eqref{eq:wideband1} and \eqref{eq:wideband2}, and formulate the
signals according to \eqref{COHERENT}.
\item Collect all the sweeping results in \eqref{eq:ELEMENTCOHERENT} and
formulate them as $\mathcal{Y}$.
\item Construct the spatially smoothed tensor $\mathcal{Y}_{\textrm{ss}}$
by using \eqref{eq:spatailSmoothing}.
\item Take HOSVD of $\mathcal{Y}_{\textrm{ss}}$ and get $\mathcal{U}_{\textrm{s}}$
according to \eqref{eq:truncatedHOSVD} and \eqref{eq:Us}.
\item Use TLS-ESPRIT \eqref{eq:UsJv}-\eqref{eq:TLSfinal}, and estimate
$\hat{\theta}_{k}$ and $\tau_{k}$ by using \eqref{theta} and \eqref{delay},
respectively.
\item Calculate the noise subspace tensor $\mathcal{Y}_{\textrm{n}}$ in
\eqref{eq:noisesubspace} and estimate $\hat{\phi}_{k}$ by searching
the prominent peaks of \eqref{eq:MUSIC2}.
\end{itemize}
\end{algorithm}

\subsection{Complexity Analysis}

The hardware and software complexity of the proposed tensor-based
parameter estimation algorithm is analyzed. The proposed hybrid beamformers
reduces the hardware complexity to $O(M_{\textrm{bsr}})=O(PM_{\textrm{v}})$,
while fully digital beamformers using the same number of antennas
have hardware complexity $O(M_{\textrm{bs}})$.

As for signal processing complexity, we compare the computational
complexity of the proposed tensor-based algorithm with its matrix-based
counterpart and the state-of-the-art CP-based orthogonal matching
pursuit (CP-OMP) algorithm. For matrix-based algorithms, the computational
complexity of performing SVD on the measurement sample matrix and
truncating its rank to $K$ is $O(PM_{\textrm{v}}M_{\textrm{f}}M_{\textrm{t}}K)$.
The complexities of estimating the delay, elevation angle, and azimuth
angle are $O(K^{3}+PM_{\textrm{v}}M_{\textrm{f}})$, $O(K^{3}+PM_{\textrm{v}})$,
and $O(PK^{2}+P^{2}KD),$ respectively. $D$ is the size of search
dimension. Thus, the overall complexity of the matrix-based estimation
is $O(PM_{\textrm{v}}M_{\textrm{f}}M_{\textrm{t}}K+PM_{\textrm{v}}M_{\textrm{f}}+K^{3}+PM_{\textrm{v}}+PK^{2}+P^{2}KD)$.
For the proposed tensor-based algorithm, the truncated HOSVD of the
measurement tensor evaluates the SVD of its matricized form in each
mode and discards insignificant singular vectors. The complexity is
$O(4PM_{\textrm{v}}M_{\textrm{f}}M_{\textrm{t}}K)=O(PM_{\textrm{v}}M_{\textrm{f}}M_{\textrm{t}}K)$.
The complexity of computing the core $\mathcal{L}_{\textrm{n}}$ and
the tensor signal subspace $\mathcal{U}_{\textrm{s}}$ in \eqref{eq:Us}
is $O(PM_{\textrm{v}}M_{\textrm{f}}M_{\textrm{t}}K+PM_{\textrm{v}}M_{\textrm{f}}K^{2})$.
The complexities of estimating delay, elevation and azimuth angles
are $O(PM_{\textrm{v}}M_{\textrm{f}}+K^{3})$, $O(PM_{\textrm{v}}M_{\textrm{f}}+K^{3})$
and $O(PM_{\textrm{v}}M_{\textrm{f}}M_{\textrm{t}}K+P^{2}KD)$, respectively.
The tensor-based algorithm needs slightly more computations, but its
complexity is still in the same order with that of its matrix-based
counterpart. The CP-OMP algorithm \cite{tensormmWave} applies CP
decomposition to decompose the received signal tensor model, and then
uses OMP to estimate the parameters. The complexities are $O(PM_{\textrm{v}}M_{\textrm{f}}M_{\textrm{t}}K+PM_{\textrm{v}}M_{\textrm{f}}K^{2}+K^{4})$
and $O(PM_{\textrm{v}}M_{\textrm{f}}M_{\textrm{t}}(N_{1}+N_{2}+N_{3}+N_{4}))$,
respectively, where $N_{1}\gg K$, $N_{2}\gg K$, $N_{3}\gg K$ and
$N_{4}\gg K$ are the dimensions of the OMP grid. The CP-OMP algorithm
has a much higher complexity than that of our algorithm. A comparison
study of computational complexity between the three algorithms is
provided in Table 1, which also shows the computer runtime obtained
by running MATLAB simulations on a ThinkPad X1 Carbon with an i5 processor
and 8 GB memory (where $P=12$, $M_{\textrm{v}}=20$, $M_{\textrm{t}}=20$,
$M_{\textrm{f}}=20$, $K=5$, $D=50$ and $N_{i}=50$ for $i=1,2,3,4$).

\begin{table*}
\centering{}\caption{Computational complexity and CPU running time of three algorithms.}
\begin{tabular}{|>{\centering}p{2cm}|>{\centering}p{4.2cm}|>{\centering}p{4.2cm}|>{\centering}p{4.2cm}|}
\hline
 & Proposed tensor-based algorithm  & Matrix-based counterpart  & CP-OMP algorithm\tabularnewline
\hline
\hline
Channel decomposition  & $O(PM_{\textrm{v}}M_{\textrm{f}}M_{\textrm{t}}K)$ (385.79 ms)  & $O(PM_{\textrm{v}}M_{\textrm{f}}M_{\textrm{t}}K)$ (102.58 ms)  & $O(PM_{\textrm{v}}M_{\textrm{f}}M_{\textrm{t}}K+K^{4}$$+PM_{\textrm{v}}M_{\textrm{f}}K^{2})$
(1284.82 ms)\tabularnewline
\hline
Parameter estimation  & $O(PM_{\textrm{v}}M_{\textrm{f}}K^{2}+PM_{\textrm{v}}M_{\textrm{f}}M_{\textrm{t}}K+K^{3}+P^{2}KD)$
(189.74 ms)  & $O(PM_{\textrm{v}}M_{\textrm{f}}+K^{3}+PM_{\textrm{v}}+PK^{2}+P^{2}KD)$
(75.09 ms)  & $O(PM_{\textrm{v}}M_{\textrm{f}}M_{\textrm{t}}(N_{1}+N_{2}+N_{3}+N_{4}))$
(1740.59 ms) \tabularnewline
\hline
Total  & $O(PM_{\textrm{v}}M_{\textrm{f}}+K^{3}+PM_{\textrm{v}}M_{\textrm{f}}M_{\textrm{t}}K+PM_{\textrm{v}}M_{\textrm{f}}K^{2}+P^{2}KD)$ (575.52 ms)
& $O(PM_{\textrm{v}}M_{\textrm{f}}M_{\textrm{t}}K+PM_{\textrm{v}}M_{\textrm{f}}+K^{3}+PM_{\textrm{v}}+PK^{2}+P^{2}KD)$

(177.68 ms)  & $O(PM_{\textrm{v}}M_{\textrm{f}}M_{\textrm{t}}K+PM_{\textrm{v}}M_{\textrm{f}}K^{2}+K^{4}+PM_{\textrm{v}}M_{\textrm{f}}M_{\textrm{t}}(N_{1}+N_{2}+N_{3}+N_{4}))$ (3025.41 ms) \tabularnewline
\hline
\end{tabular}
\end{table*}

\section{Simulation Results}

In this section, simulation results are provided to to demonstrate
the performance of the proposed algorithm. We simulate a system with
2 GHz bandwidth and a total of 2,000 subcarriers. Out of the total
2,000 subcarriers, $M_{\textrm{f}}=20$ evenly spaced subcarriers
are selected for the proposed channel parameter estimation. Each of
the subcarriers undergoes flat fading. The reference frequency $f_{0}=28$
GHz, and the number of time frames is $M_{\textrm{t}}$= 20. To evaluate
the performance of the proposed algorithm in typical mmWave channels,
all the channel parameters are set according to 3GPP TR 38.901 \cite{3GPP}.
An Urban Micro (UMi) scenario is considered in our simulation, and
thus, the UMi pathloss model presented in \cite{3GPP} is applied.
 The number of time frames is set
to $M_{\textrm{t}}$= 20. We assume that there are $K=5$ signals,
two of which are coherent. The actual azimuth angles, elevation angles,
and delays of the signals are set up randomly each time. The distance
between vertically adjacent UCAs is $h=0.5\lambda_{0}$ and the radius
of the UCyA is $r=2\lambda_{0}$, where $\lambda_{0}=c/f_{0}$.

We compare the proposed tensor-based coherent TLS (T-CTLS) algorithm
with its variation (T-CLS) which applies the LS procedure for solving
the invariance equation \eqref{eq:Usv12}; its variation without using
smoothing (T-CTLS w/o S); its reduced version in the matrix form (M-CTLS);
the state-of-the-art matrix-based incoherent generalized beamspace
MUSIC (M-IGBM) \cite{we11}; the tensor-based incoherent MUSIC (T-IM)
\cite{tensorMUSIC}; and the state-of-the-art CP-OMP \cite{tensormmWave}.
The CRLB is derived according to \cite{***ding9,bounds}. Note that
both CP-OMP and our proposed parameter estimation algorithms are only
applicable for additive Gaussian noises, where the noises are independent
between different antennas and the noise power is identical at the
antennas. This is because the algorithms which exploit the second-order
statistics of the received signals cannot correctly decompose the
signal and (non-Gaussian) noise subspaces.

\begin{figure*}
\begin{centering}
\includegraphics[width=15cm]{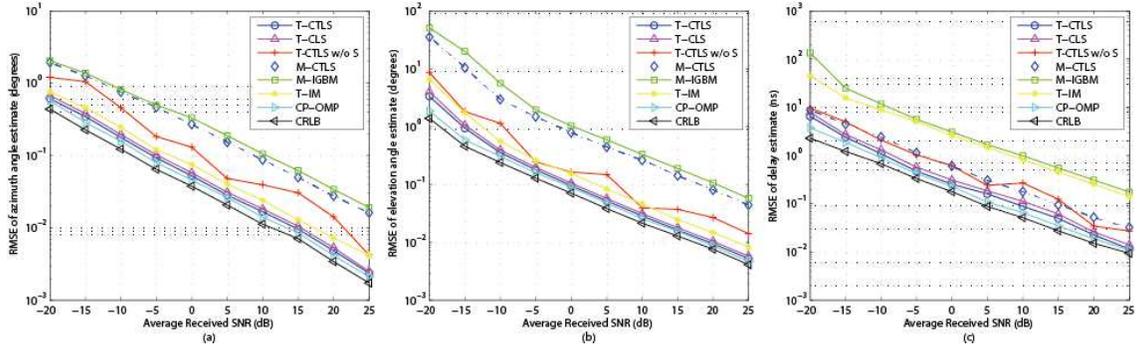}
\par\end{centering}
\caption{RMSE vs. the average received SNR for the estimation of different
parameters. (a) Azimuth angle; (b) Elevation angle; (c) Delay ($f_{0}=28$
GHz; $B=2$ GHz; $M_{\textrm{t}}$= 20; $M_{\textrm{f}}=20$; $K=5$;
and $M_{\textrm{bs}}=400$).\label{fig:snr}}
\end{figure*}

\begin{figure*}
\begin{centering}
\includegraphics[width=15cm]{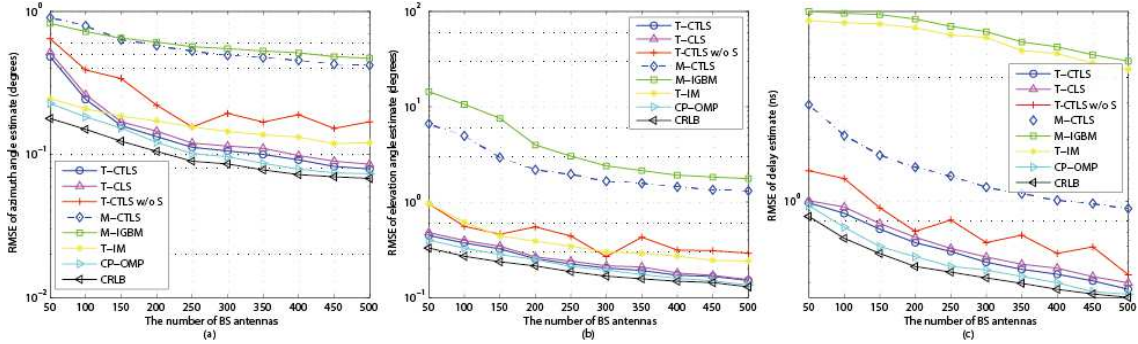}
\par\end{centering}
\caption{RMSE vs. the number of BS antennas for the estimation of different
parameters. (a) Azimuth angle; (b) Elevation angle; (c) Delay ($f_{0}=28$
GHz; $B=2$ GHz; $M_{\textrm{t}}$= 20; $M_{\textrm{f}}=20$; $K=5$;
and SNR$=$-5 dB).\label{fig:antenna}}
\end{figure*}

Fig. \ref{fig:snr} plots the root mean square errors (RMSEs) for
the estimates of azimuth angles, elevation angles, and delays of the
signals versus the average received SNR, where the BS has 400 receive
antennas. Fig. \ref{fig:snr} shows that our proposed T-CTLS algorithm
outperforms the other algorithms, and its RMSE approaches the CRLB.
In Figs. \ref{fig:snr}(a) and (b), we see that the tensor-based algorithms
provide higher accuracy than their matrix-based counterparts, especially
in low SNR regimes. The matrix-based algorithms are less robust to
noises than the proposed tensor-based algorithms. We also see that
CP-OMP has slightly better performance than our proposed algorithm,
due to the fact that CP decomposition can be regarded as a maximum
likelihood method under the additive Gaussian noise. However, its
performance improvement is limited since OMP can only generate discrete
estimates. In addition, CP-OMP also has a much higher complexity than
our algorithm, as analyzed in Section V-D. Fig. \ref{fig:snr}(c)
shows that the methods applying coherent wideband signal preprocessing
outperform those employing incoherent wideband preprocessing, in terms
of delay estimation, because the former fully exploits the high temporal
resolution offered by wideband mmWave systems.

Fig. \ref{fig:antenna} shows the RMSEs versus the number of receive
antennas under -5 dB SNR. It is seen that the RMSE of the estimated
parameters approaches the CRLB, as the number of antennas increases.
However, when the number of antennas is not very large, e.g. less
than 100, the algorithms, including T-CTLS, T-CTL, and M-CTLS, cannot
achieve accurate azimuth angle estimation, as shown in Fig. \ref{fig:antenna}(a).
The reason is that the conditions of Theorem 1 may not be met, and
thus the approximation in \eqref{eq:Theorem1} becomes inaccurate.
Nevertheless, when the number of antennas is large, the RMSEs of these
three algorithms decrease fast, and T-CTLS rapidly outperforms the
others. By comparing Figs. \ref{fig:snr} and \ref{fig:antenna},
we also see that if the proposed spatial smoothing technique is not
applied, the estimation accuracy of the proposed algorithm decreases
noticeably. This is because two coherent signals are decorrelated,
the signal and noise subspaces can be incorrectly decoupled without
spatial smoothing, and the parameters of the coherent signals cannot
be precisely estimated.

In order to validate Theorem 1, Fig. \ref{fig:Pp} plots the RMSE
of the parameter estimation versus the highest order, $P$, with different
numbers of horizontal array steering vectors. The SNR is -5 dB. We
see that when $P$ is less than 10 or the number of the horizontal
array steering vectors in \eqref{eq:horizonarray} is 20, the algorithms
applying Theorem 1 to design the hybrid beamformers (i.e., T-CTLS
and M-CTLS), cannot achieve satisfactory estimation, because the number
of the transformed beamspace vectors \eqref{eq:Theorem1} is not sufficient
to represent the array response vectors. When $P\geq12$, regardless
of the number of array response vectors, increasing the beamspace
vectors has little impact on the estimation. By exploiting this property,
we can reduce the number of required RF chains and, in turn, the hardware
cost.

Fig. \ref{fig:Mfvs} shows the RMSE of the estimated azimuth angles,
elevation angles, and delays, with an increasing number of received
paths. T-CTLS and M-CTLS are tested. We set SNR to -5 dB and $M_{\textrm{f}}=8$.
We observe that the performance gap between the matrix and tensor
forms of the proposed algorithm, i.e., M-CTLS and T-CTLS, decreases
with the increasing number of received paths. This is because the
noise components which can be suppressed by using the tensor-based
algorithms in the first, second, and third modes of $\mathcal{Y}$,
depend on the difference between the number of paths and the tensor
dimension in each mode of $\mathcal{Y}$. As the number of received
paths increases, the gain of the tensor-based algorithm, T-CTLS, diminishes.
The performance gap remains consistent in Fig. \ref{fig:Mfvs}(a)
though. This is because, despite the number of paths increases, the
dimension in the first mode of $\mathcal{Y}$, i.e., $M_{\textrm{vd}}=2P+1$,
is still much larger than the number of paths. Moreover, we estimate
the azimuth angles with tensor-MUSIC in \eqref{eq:music}. The method
involves peak search, and is hardly affected by the number of paths.
In conclusion, the new tensor-based algorithm, T-CTLS, can achieve
much better performance than its matrix-based counterpart, especially
under B5G settings where the number of received paths is small due
to the sparsity of mmWave propagation.

\begin{figure*}
\begin{centering}
\includegraphics[width=15cm]{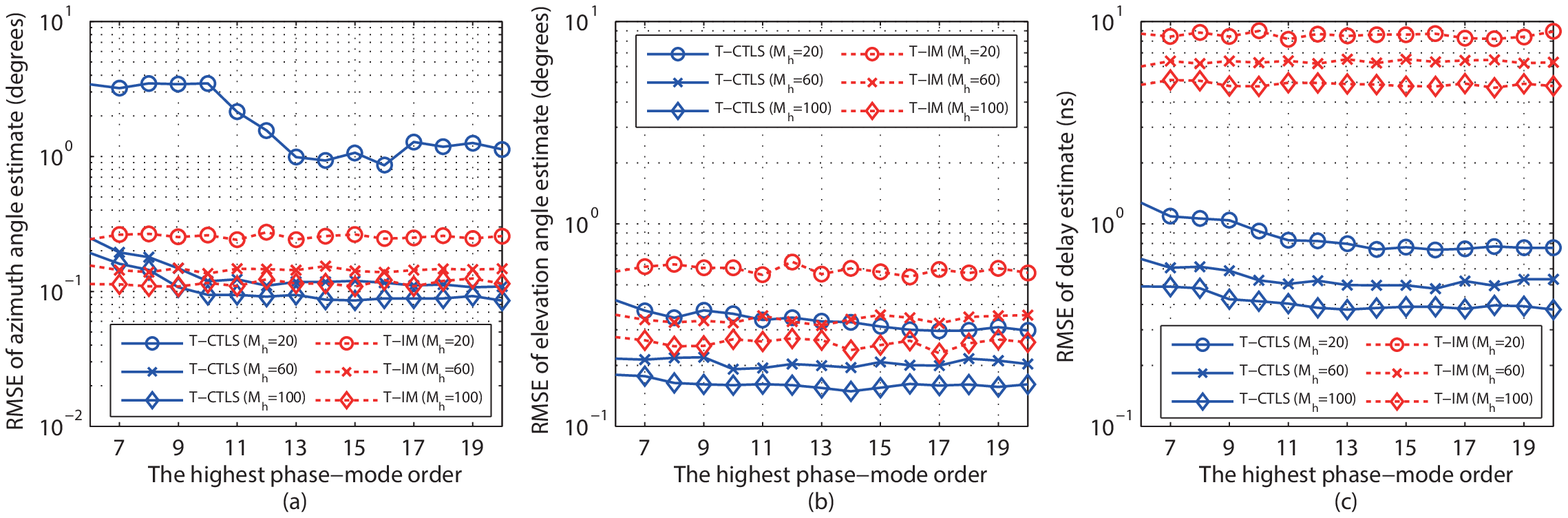}
\par\end{centering}
\caption{RMSE vs. the highest beamspace dimension. (a) Azimuth angle; (b) Elevation
angle; (c) Delay ($f_{0}=28$ GHz; $B=2$ GHz; $M_{\textrm{t}}$=
20; $M_{\textrm{f}}=20$; $K=5$; and SNR$=$-5 dB).\label{fig:Pp}}
\end{figure*}

\begin{figure*}
\begin{centering}
\includegraphics[width=15cm]{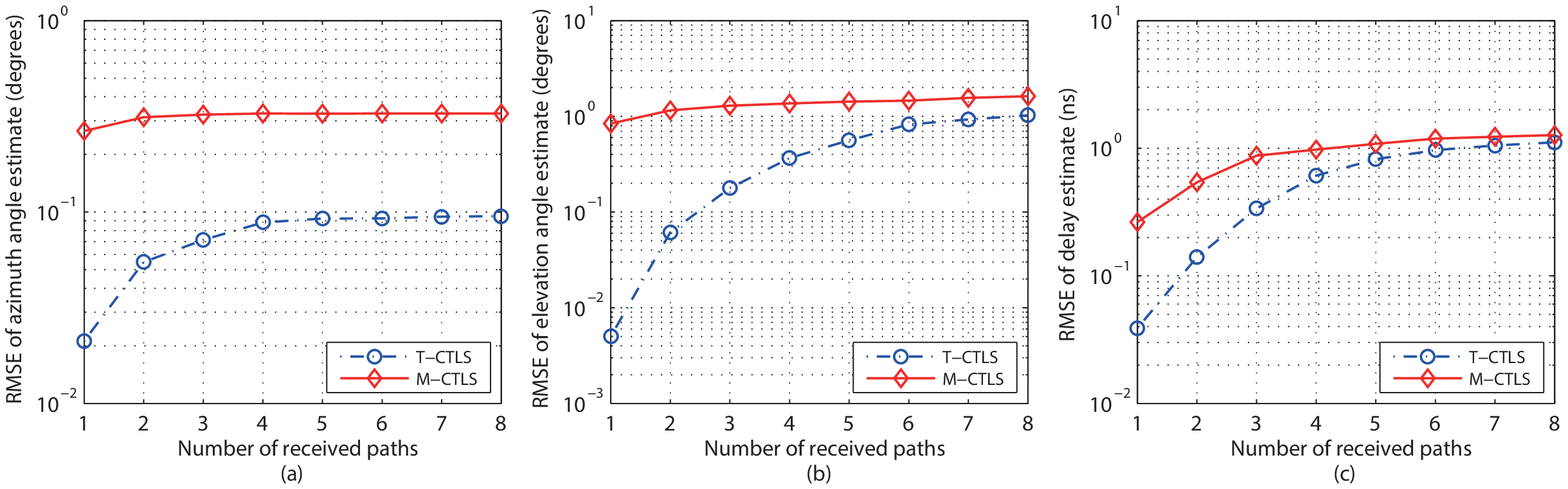}
\par\end{centering}
\caption{RMSE vs. the number of received paths. (a) Azimuth angle; (b) Elevation
angle; (c) Delay ($f_{0}=28$ GHz; $B=2$ GHz; $M_{\textrm{t}}$=
20; $M_{\textrm{f}}=8$; and SNR$=$-5 dB).\label{fig:Mfvs}}
\end{figure*}

\section{Conclusion}

We presented a new tensor-based multi-dimensional channel parameter
estimation algorithm for 5G/B5G wideband mmWave large-scale hybrid
antenna arrays. By exploiting the multidimensional structure of the
received signals, the algorithm can suppress the noises across all
domains of the received signals, improving estimation accuracy. Specifically,
we designed the hybrid beamformers to maintain the angular resolution
and suppress beam squinting. We developed the new HOSVD model to suppress
the noise, and revealed the shift-invariance relations in the tensor
form. Given the relations, we designed the new tensor-based TLS-ESPRIT
algorithm for parameter estimation. We also rearranged the measurement
tensor to estimate coherent signals. By applying the channel parameters
presented by 3GPP TR 38.901 {[}31{]}, simulations show that the proposed
tensor-based algorithm can accurately estimate the multi-dimensional
parameters in typically used mmWave channels, even under low SNRs.

\section*{Appendix I\protect \protect \protect \protect \protect \protect
\protect \protect }

\section*{Properties of Tensor Operation}

The important properties of tensor operations used in this paper are
provided below.

\begin{property} The $n$-mode product satisfies the following properties:
\begin{equation}
\mathcal{A}\times_{n}\mathbf{B}\times_{n}\mathbf{C}=\mathcal{A}\times_{n}\left(\mathbf{CB}\right);\label{eq:property1}
\end{equation}
\begin{equation}
\mathcal{A}\times_{n}\mathbf{B}\times_{m}\mathbf{D}=\mathcal{A}\times_{m}\mathbf{D}\times_{n}\mathbf{B},\label{eq:property2}
\end{equation}
where $\mathcal{A}\in\mathbb{C}^{I_{1}\times I_{2}\times\cdots\times I_{N}}$
, $\mathbf{B}\in\mathbb{C}^{J_{n}\times I_{n}}$, $\mathbf{C}\in\mathbb{C}^{K_{n}\times J_{n}}$,
and $\mathbf{D}\in\mathbb{C}^{J_{m}\times I_{m}}$ $(n,m=1,2,\ldots,N$
and $n\neq m)$. \end{property}

\begin{property} The Tucker decomposition decomposes a tensor $\mathcal{A}\in\mathbb{C}^{I_{1}\times I_{2}\times\cdots\times I_{N}}$
into a core tensor $\mathcal{G}\in\mathbb{C}^{R_{1}\times R_{2}\times\cdots\times R_{N}}$
multiplied by a factor matrix $\mathbf{C}^{(n)}=\left[\mathbf{c}_{r_{n}=1}^{(n)},\mathbf{c}_{r_{n}=2}^{(n)},\ldots,\mathbf{c}_{r_{n}=R_{n}}^{(n)}\right]\in\mathbb{C}^{I_{n}\times R_{n}}$
($\mathbf{c}_{r_{n}}^{(n)}\in\mathbb{C}^{I_{n}\times1}$ and $n=1,2,\ldots,N$)
in each mode, i.e.,
\begin{align}
\mathcal{A} & =\sum_{r_{1}=1}^{R_{1}}\sum_{r_{2}=1}^{R_{2}}\cdots\sum_{r_{N}=1}^{R_{N}}g_{r_{1}r_{2}\cdots r_{N}}\left(\mathbf{c}_{r_{1}}^{(1)}\circ\mathbf{c}_{r_{2}}^{(2)}\circ\cdots\mathbf{c}_{r_{N}}^{(N)}\right)\nonumber \\
 & =\left\llbracket \mathcal{G};\mathbf{C}^{(1)},\mathbf{C}^{(2)},\ldots,\mathbf{C}^{(N)}\right\rrbracket .\label{eq:Tucker}
\end{align}
The HOSVD is a special
case of the Tucker decomposition, where the core tensor is all-orthogonal
\cite{TensorDecompositions1}, and the factor matrices are the unitary
left singular matrices of the mode-$n$ unfolding of $\mathcal{A}$.
\end{property}

\begin{property} The CANDECOMP/PARAFAC (CP) decomposition decomposes
a tensor $\mathcal{A}\in\mathbb{C}^{I_{1}\times I_{2}\times\cdots\times I_{N}}$
into a sum of rank-one component tensors $\mathbf{b}_{r}^{(n)}\in\mathbb{C}^{I_{n}}$,
as given by
\begin{equation}
\mathcal{A}=\sum_{r=1}^{R}\lambda_{r}\mathbf{b}_{r}^{(1)}\circ\mathbf{b}_{r}^{(2)}\circ\cdots\mathbf{b}_{r}^{(N)},\label{eq:cpdefine}
\end{equation}
where $R=\textrm{Rank}(\mathcal{A})$ is the rank of $\mathcal{A}$\footnote{The rank of a tensor, $\mathcal{A},$ denoted $\textrm{Rank}(\mathcal{A})$,
is defined as the smallest number of rank-one tensors that yield $\mathcal{A}$
in a linear combination \cite{TensorDecompositions1}.}. Following \cite{TensorDecompositions1}, CP can be viewed as the
special case of the Tucker decomposition, where the core tensor is
superdiagonal. Thus, the CP model in \eqref{eq:cpdefine} can be rewritten
as a multilinear product:
\begin{align}
\mathcal{A} & =\mathcal{D}\times_{1}\mathbf{B}^{(1)}\times_{2}\mathbf{B}^{(2)}\cdots\times_{N}\mathbf{B}^{(N)}\nonumber \\
 & =\left\llbracket \mathcal{D};\mathbf{B}^{(1)},\mathbf{B}^{(2)},\ldots,\mathbf{B}^{(N)}\right\rrbracket ,\label{eq:CP2}
\end{align}
where $\mathbf{B}^{(n)}=\left[\mathbf{b}_{1}^{(n)},\mathbf{b}_{2}^{(n)},\ldots,\mathbf{b}_{R}^{(n)}\right]\in\mathbb{C}^{J_{n}\times R}$
is the factor matrix of $\mathbf{b}_{r}^{(n)}$, and $\mathcal{D}\in\mathbb{C}^{R\times R\times\cdots\times R}$
is a superdiagonal tensor\footnote{A tensor $\mathcal{A}\in\mathbb{C}^{I_{1}\times I_{2}\times\cdots\times I_{N}}$
is diagonal if $a_{i_{1}i_{2}\cdots i_{N}}\neq0$ only if $i_{1}=i_{2}=\cdots=i_{N}$.
When $I_{1}=I_{2}=\cdots=I_{N}$, $\mathcal{A}$ is called as superdiagonal.} with $d_{r,r,\cdots,r}=\lambda_{r}$. \end{property}

\begin{property} The multilinear product of a tensor $\mathcal{A}\in\mathbb{C}^{I_{1}\times I_{2}\times\cdots\times I_{N}}$
with matrices $\mathbf{B}^{(n)}\in\mathbb{C}^{J_{n}\times I_{n}}$,
$n=1,2,\ldots,N$, is a sequence of contractions, each being an $n$-mode
product, i.e.,
\begin{equation}
\mathcal{C}=\mathcal{A}\times_{1}\mathbf{B}^{(1)}\times_{2}\mathbf{B}^{(2)}\cdots\times_{N}\mathbf{B}^{(N)}\in\mathbb{C}^{J_{1}\times J_{2}\times\cdots\times J_{N}}.\label{eq:multiproduct}
\end{equation}
Its mode-$n$ unfolding is given by
\begin{align}
\mathbf{C}_{(n)} & =\mathbf{B}^{(n)}\mathbf{A}_{(n)}(\mathbf{B}^{(n+1)}\otimes\mathbf{B}^{(n+2)}\otimes\nonumber \\
 & \cdots\otimes\mathbf{B}^{(N)}\otimes\mathbf{B}^{(1)}\otimes\mathbf{B}^{(2)}\otimes\cdots\otimes\mathbf{B}^{(n-1)})^{T}.\label{eq:propertymultiproduct}
\end{align}
\end{property}

\section*{Appendix II\protect \protect \protect \protect \protect \protect
\protect \protect }

\section*{Proof of Theorem 1}

Let $\gamma_{m_{\textrm{f}}}(\theta_{k_{m_{\textrm{b}}}})=\frac{2\pi}{c}f_{m_{\textrm{f}}}r\sin(\theta_{k_{m_{\textrm{b}}}}).$
The Q-DFT of $a_{\textrm{h},m_{\textrm{f}},m_{\textrm{b}}}(\theta_{k_{m_{\textrm{b}}}},\phi_{k_{m_{\textrm{b}}}})$
can be expressed as
\begin{align}
 & a_{\textrm{QDFT},p,m_{\textrm{f}},m_{\textrm{b}}}(\theta_{k_{m_{\textrm{b}}}},\phi_{k_{m_{\textrm{b}}}})\nonumber \\
 & \stackrel{(\textrm{a})}{=}\sum_{m_{\textrm{h}}=1}^{M_{\textrm{h}}}\left(\frac{1}{\sqrt{M_{\textrm{h}}}}\sum_{q=-\infty}^{\infty}j^{q}J_{q}(\gamma_{m_{\textrm{f}}}(\theta_{k_{m_{\textrm{b}}}}))e^{jq(\phi_{k_{m_{\textrm{b}}}}-\varphi_{m_{\textrm{h}}})}\right)\nonumber \\
 & \qquad\qquad\times e^{-j\frac{2\pi(m_{\textrm{h}}-1)}{M_{\textrm{h}}}p}\nonumber \\
 & \stackrel{(\textrm{b})}{=}\frac{1}{\sqrt{M_{\textrm{h}}}}\sum_{Q=-\infty}^{\infty}M_{\textrm{h}}j^{(QM_{\textrm{h}}-p)}J_{(QM_{\textrm{h}}-p)}(\gamma_{m_{\textrm{f}}}(\theta_{k_{m_{\textrm{b}}}}))\nonumber \\
 & \qquad\qquad\times e^{j(QM_{\textrm{h}}-p)\phi_{k_{m_{\textrm{b}}}}}\nonumber \\
 & \stackrel{(\textrm{c})}{=}\sqrt{M_{\textrm{h}}}[j^{p}J_{p}(\gamma_{m_{\textrm{f}}}(\theta_{k_{m_{\textrm{b}}}}))e^{-jp\phi_{k_{m_{\textrm{b}}}}}\nonumber \\
 & +\sum_{Q=-\infty,Q\neq0}^{\infty}\varepsilon_{p,Q}(\gamma_{m_{\textrm{f}}}(\theta_{k_{m_{\textrm{b}}}}),\phi_{k_{m_{\textrm{b}}}})]\label{eq:T1}
\end{align}
where
\begin{align}
 & \varepsilon_{p,Q}(\gamma_{m_{\textrm{f}}}(\theta_{k_{m_{\textrm{b}}}}),\phi_{k_{m_{\textrm{b}}}})\nonumber \\
 & =j^{(QM_{\textrm{h}}-p)}J_{(QM_{\textrm{h}}-p)}(\gamma_{m_{\textrm{f}}}(\theta_{k_{m_{\textrm{b}}}}))e^{j(QM_{\textrm{h}}-p)\phi_{k_{m_{\textrm{b}}}}}.
\end{align}
In \eqref{eq:T1}, $(\textrm{a})$ and $(\textrm{c})$ follow the
important properties of the Bessel function, i.e., $e^{jx\cos y}=\sum_{v=-\infty}^{\infty}j^{v}J_{v}(x)e^{jvy}$
and $J_{-v}(x)=(-1)^{v}J_{v}(x)$, respectively; $(\textrm{b})$ is
obtained by letting $p+q=QM_{\textrm{h}}$; and $(\textrm{c})$ stems
from the property of the Bessel function $J_{-v}(x)=(-1)^{v}J_{v}(x)$
\cite{bessel}.

Consider that the number of antennas per UCA, $M_{\textrm{h}}$, is
large, i.e, $M_{\textrm{h}}\gg P$. Let $M_{\textrm{h}}=\alpha P$
and $\gamma_{m_{\textrm{f}}}(\theta_{k_{m_{\textrm{b}}}})=\beta P,$
where $\alpha\gg1$ and $0<\beta<1$. According to \cite{bessel},
we have $J_{v}(v\rho)<J_{v}(v)$ and $J_{v_{1}}(v_{1}\rho)<J_{v_{2}}(v_{2}\rho)$,
where $v_{1}>v_{2}$ and $\rho\in(0,1)$. Since $P\geq\left\lfloor 2\pi f_{m_{\textrm{f}}}r/c\right\rfloor $,
we have $J_{(QM_{\textrm{h}}-p)}(\gamma_{m_{\textrm{f}}}(\theta_{k_{m_{\textrm{b}}}}))<J_{(\alpha-1)P}\left(\beta P\right)$
and $J_{P}\left(\beta P\right)\leq J_{p}(\gamma_{m_{\textrm{f}}}(\theta_{k_{m_{\textrm{b}}}}))$.
Set $\alpha=3$ and $\beta=0.5$ for an example. In general, $P>3$.
Hence,$J_{p}(\gamma_{m_{\textrm{f}}}(\theta_{k_{m_{\textrm{b}}}}))\geq J_{3}\left(1.5\right)\approx0.06$
and
\begin{align}
 & J_{(Q_{2}M_{\textrm{h}}-p)}(\gamma_{m_{\textrm{f}}}(\theta_{k_{m_{\textrm{b}}}}))<J_{(Q_{1}M_{\textrm{h}}-p)}(\gamma_{m_{\textrm{f}}}(\theta_{k_{m_{\textrm{b}}}}))\nonumber \\
 & <J_{(M_{\textrm{h}}-p)}(\gamma_{m_{\textrm{f}}}(\theta_{k_{m_{\textrm{b}}}}))<J_{6}\left(1.5\right)\approx0.0002,
\end{align}
where $Q_{2}>Q_{1}>1$. Compared with $J_{p}(\gamma_{m_{\textrm{f}}}(\theta_{k_{m_{\textrm{b}}}}))$,
the amplitude of $J_{(QM_{\textrm{h}}-p)}(\gamma_{m_{\textrm{f}}}(\theta_{k_{m_{\textrm{b}}}}))$
is so small and can be omitted. We suppress $\varepsilon_{p,Q}(\gamma_{m_{\textrm{f}}}(\theta_{k_{m_{\textrm{b}}}}),\phi_{k_{m_{\textrm{b}}}})$
and approximate \eqref{eq:T1} as
\begin{align}
 & a_{\textrm{QDFT},p,m_{\textrm{f}},m_{\textrm{b}}}(\theta_{k_{m_{\textrm{b}}}},\phi_{k_{m_{\textrm{b}}}})\nonumber \\
 & \approx\sqrt{M_{\textrm{h}}}j^{p}J_{p}(\gamma_{m_{\textrm{f}}}(\theta_{k_{m_{\textrm{b}}}}))\exp(-jp\phi_{k_{m_{\textrm{b}}}}).
\end{align}
This concludes the proof of Theorem 1.

 \bibliographystyle{IEEEbib}
\bibliographystyle{IEEEbib}

\begin{IEEEbiography}[{\includegraphics[width=1in,height=1.25in,clip,keepaspectratio]{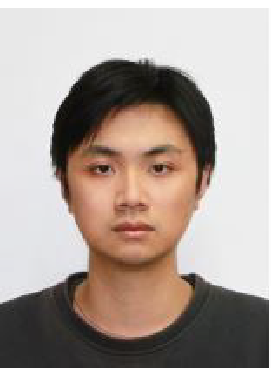}}]{Zhipeng Lin}

(M'20)  is currently working toward the dual Ph.D. degrees in communication and information engineering with the School of Information and Communication Engineering, Beijing University of Posts and Telecommunications, Beijing, China, and the School of Electrical and Data Engineering, University of Technology of Sydney,  NSW, Australia. His current research interests include millimeter-wave communication, massive MIMO, hybrid beamforming, wireless localization, and tensor processing.

\end{IEEEbiography}

\begin{IEEEbiography}[{\includegraphics[width=1in,height=1.25in,clip,keepaspectratio]{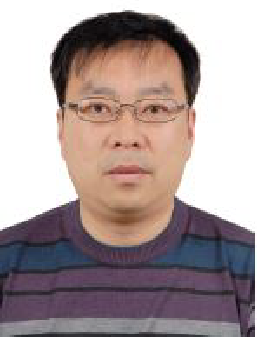}}]{Tiejun Lv}
(M'08-SM'12) received the M.S. and Ph.D. degrees in electronic engineering from the University of Electronic Science and Technology of China (UESTC), Chengdu, China, in 1997 and 2000, respectively. From January 2001 to January 2003, he was a Postdoctoral Fellow with Tsinghua University, Beijing, China. In 2005, he was promoted to a Full Professor with the School of Information and Communication Engineering, Beijing University of Posts and Telecommunications (BUPT). From September 2008 to March 2009, he was a Visiting Professor with the Department of Electrical Engineering, Stanford University, Stanford, CA, USA. He is the author of 3 books, more than 80 published IEEE journal papers and 180 conference papers on the physical layer of wireless mobile communications. His current research interests include signal processing, communications theory and networking. He was the recipient of the Program for New Century Excellent Talents in University Award from the Ministry of Education, China, in 2006. He received the Nature Science Award in the Ministry of Education of China for the hierarchical cooperative communication theory and technologies in 2015.
\end{IEEEbiography}

\begin{IEEEbiography}[{\includegraphics[width=1in,height =1.25in,clip,keepaspectratio]{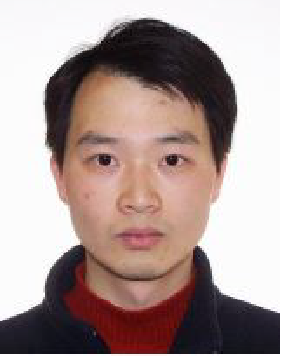}}]{Wei Ni}
(M'09-SM'15) received the B.E. and Ph.D. degrees in Electronic Engineering from Fudan University, Shanghai, China, in 2000 and 2005, respectively. Currently, he is a Group Leader and Principal Research Scientist at CSIRO, Sydney, Australia, and an Adjunct Professor at the University of Technology Sydney and Honorary Professor at Macquarie University, Sydney. He was a Postdoctoral Research Fellow at Shanghai Jiaotong University from 2005 -- 2008; Deputy Project Manager at the Bell Labs, Alcatel/Alcatel-Lucent from 2005 to 2008; and Senior Researcher at Devices R\&D, Nokia from 2008 to 2009. His research interests include signal processing, stochastic optimization, learning, as well as their applications to network efficiency and integrity.

Dr Ni is the Chair of IEEE Vehicular Technology Society (VTS) New South Wales (NSW) Chapter since 2020 and an Editor of IEEE Transactions on Wireless Communications since 2018. He served first the Secretary and then Vice-Chair of IEEE NSW VTS Chapter from  2015 to 2019, Track Chair for VTC-Spring 2017, Track Co-chair for IEEE VTC-Spring 2016, Publication Chair for BodyNet 2015, and  Student Travel Grant Chair for WPMC 2014.
\end{IEEEbiography}

\begin{IEEEbiography}[{\includegraphics[width=1in,height=1.25in,clip,keepaspectratio]{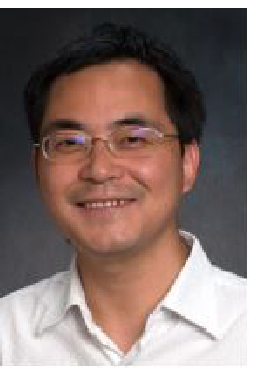}}]{J. Andrew Zhang}
(M'04-SM'11) received the B.Sc. degree from Xi'an JiaoTong University, China, in 1996, the M.Sc. degree from Nanjing University of Posts and Telecommunications, China, in 1999, and the Ph.D. degree from the Australian National University, in 2004.

Currently, Dr. Zhang is an Associate Professor in the School of Electrical and Data Engineering, University of Technology Sydney, Australia. He was a researcher with Data61, CSIRO, Australia from 2010 to 2016, the Networked Systems, NICTA, Australia from 2004 to 2010, and ZTE Corp., Nanjing, China from 1999 to 2001.  Dr. Zhang's research interests are in the area of signal processing for wireless communications and sensing. He has published more than 180 papers in leading international Journals and conference proceedings, and has won 5 best paper awards. He is a recipient of CSIRO Chairman's Medal and the Australian Engineering Innovation Award in 2012 for exceptional research achievements in multi-gigabit wireless communications.

\end{IEEEbiography}

\begin{IEEEbiography}[{\includegraphics[width=1in,height=1.25in,clip,keepaspectratio]{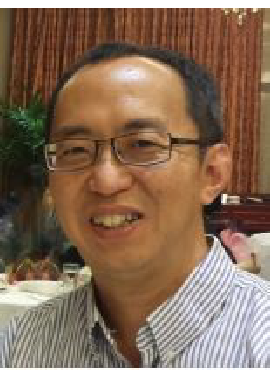}}]{Ren Ping Liu}
(M'09-SM'14) received his B.E. and M.E. degrees from Beijing University of Posts and Telecommunications, China, and the Ph.D. degree from the University of Newcastle, Australia.

He is currently a Professor and Head of Discipline of Network \& Cybersecurity at University of Technology Sydney. Professor Liu was the co-founder and CTO of Ultimo Digital Technologies Pty Ltd, developing IoT and Blockchain. Prior to that he was a Principal Scientist and Research Leader at CSIRO, where he led wireless networking research activities. He specialises in system design and modelling and has delivered networking solutions to a number of government agencies and industry customers. His research interests include wireless networking, Cybersecurity, and Blockchain.

Professor Liu was the founding chair of IEEE NSW VTS Chapter and a Senior Member of IEEE. He served as Technical Program Committee chairs and Organising Committee chairs in a number of IEEE Conferences. Prof Liu was the winner of Australian Engineering Innovation Award and CSIRO Chairman medal. He has over 200 research publications.

\end{IEEEbiography}

\end{document}